\algnewcommand\algorithmicswitch{\textbf{switch}}
\algnewcommand\algorithmiccase{\textbf{case}}
\algnewcommand\algorithmicassert{\texttt{assert}}
\algnewcommand\Assert[1]{\State \algorithmicassert(#1)}%
\newcommand\pks{\mathit{pks}}
\newcommand\pkv{\mathit{pkv}}
\newcommand\skv{\mathit{skv}}
\newcommand\tx{\mathit{tx}}
\newcommand\pkb{\mathit{pkb}}
\newcommand\skb{\mathit{skb}}
\newcommand\owners{\mathit{owner}}
\newcommand\validators{\mathit{validators}}
\newcommand\funds{\mathit{funds}}
\newcommand\settled{\mathit{settled}}
\newcommand\tid{\mathit{tx}}
\newcommand\fid{\mathit{id}}
\newcommand\fbl{\mathit{bl}}
\newcommand\fcert{\mathit{certificate}}
\algnewcommand{\LineComment}[1]{\Statex \(\triangleright\) #1}
\newcommand\send{\textbf{send }}
\newcommand\bif{\textbf{if }}
\newcommand\upon{\textbf{upon }}
\newcommand\valid{\textbf{valid}}
\newcommand\invalid{\textbf{invalid}}
\newcommand\rcvd{\textbf{received }}
\newcommand\rcpt{\textbf{receipt of }}
\newlength\myindent
\newcommand\whp{{\em w.h.p. }}
\algnewcommand{\IfThenElse}[3]{
  \State \algorithmicif\ #1\ \algorithmicthen\ #2\ \algorithmicelse\ #3}
   \algnewcommand\textwhen{\textnormal}
\algnewcommand\algorithmicwhen{\textbf{when}}   
\algnewcommand\textevery{\textnormal}
\algnewcommand\algorithmicevery{\textbf{every}}   
\newcommand{\mc}[1]{}
\newcommand{\remove}[1]{}
\title{Fractional Payment Transactions: Executing Payment Transactions in Parallel with Less than $f+1$ Validations}
\author{Rida Bazzi}
\affiliation{
  \institution{Arizona State University}
  \city{Tempe}
  \state{AZ}
  \country{USA}
}
\email{bazzi@asu.edu}
\author{Sara Tucci-Piergiovanni}
\affiliation{
  \institution{Universit\'{e} Paris-Saclay, CEA, List}
  \state{Palaiseau}
  \country{France}
  }
\email{sara.tucci@cea.fr}
\keywords{Distributed Systems, Blockchain, Quorums, Fault Tolerance}
\begin{document}

\begin{abstract}
We consider the problem of supporting payment transactions in an asynchronous system in which up to $f$ validators are subject to Byzantine failures under the control of an adaptive adversary.  It was shown that, in the case of a single owner, this problem can be solved without consensus by using byzantine quorum systems (requiring a quorum of $2f+1$ validations per transaction). Nonetheless, the process of validating transactions remains sequential. For example, if one has a balance of ten coins and intends to make separate payments of two coins each to two distinct recipients, both transactions must undergo processing by a common correct validator.  On the other hand, these two transactions are non-conflicting as they do not lead to double spending, allowing in principle for parallel validation.
In this paper, we show that it is possible to validate payment transactions in parallel with less than $f$ validations per transaction in an asynchronous system, provided that each transaction spends only a small fraction of a balance. Our solution relies on a novel class of probabilistic quorum systems that we introduce in this paper, termed  \textit{$(k_1,k_2)$-quorum systems}. In the absence of an adaptive adversary, \textit{$(k_1,k_2)$-quorum systems} can be used to enable concurrent and asynchronous validation of up to $k_1$ transactions while preventing validation of more than $k_2$ transactions. In the presence of an adaptive adversary, at least $k_1$ transactions can be validated concurrently, but validation of more than $k'_2 > k_2$  transactions is prevented, with the difference $k'_2-k_2$ dependent on the quorum system's {\em validation slack} -- a term defined in this paper. 
Employing a $(k_1, k_2)$-quorum system, we introduce protocols enabling a payer to validate multiple \textit{fractional spending} transactions in parallel with less than $f+1$ validations per transaction. Subsequently, the payer reclaims any remaining funds through a fully validated transaction, referred to as a \textit{settlement} transaction.
\end{abstract}

\maketitle

\section{Introduction}
Existing cryptocurrencies solve the consensus problem to maintain a shared ledger of all transactions. By having everyone agree on a global order of all transactions in the system, double spending for payment transactions is eliminated and more involved transactions, such as smart contract transactions~\cite{wood2014ethereum}, are made possible. Maintaining a global shared ledger requires a solution to the consensus problem, which is not possible in asynchronous systems~\cite{fischer1985impossibility}. Gupta~\cite{gupta2016} has shown that maintaining a shared ledger is not always strictly needed to support payment transactions. In an asynchronous permissioned system in which at most one third of the validators are subject to Byzantine failures, Byzantine quorums can be used to allow payment transactions -- the asset transfer task. 
The fact that solving consensus is not needed for asset transfer, when the asset has a single owner, was rediscovered by other researchers who formalized the problem, gave it the name {\em asset transfer task} and  generalized it to multi-owner objects~\cite{asset-transfer}.  Other researchers generalized the approach to work in a permissionless proof-of-stake system~\cite{abc-chain}. The key insight of~\cite{gupta2016} is that (1) Byzantine quorums can prevent double spending because any two quorums must have at least one correct validator in their intersection and (2) ordering unrelated transactions is not needed for payment transactions. Consensus-less solutions are important theoretically, but also in practice for their potential to achieve higher throughput and reduced transaction latency~\cite{scalable-reliable-broadcast}. 

Despite the increased efficiency of the consensus-less approach, transaction processing is still fundamentally sequential; If one has a balance of ten coins and wants to pay two coins of the ten coins separately to two different recipients, both transactions would be required to be processed by a common {\em correct} validator, even though the two payments do not lead to double spending. This, in turn, necessitates every request to be processed by a full quorum so that any two quorums have at least $f+1$ validators in common, where $f$ is an upper-bound on the number of faulty validators. The $f+1$-intersection requirement  ensures {\em safety} (no two conflicting payments can be simultaneously validated), but it is enforced even
when payments are non-conflicting. 

In this paper, we introduce and explore the concept of \textit{fractional spending transactions}, which are payment transactions designed to utilize only a small fraction of a balance. These transactions address the common scenario where a balance significantly exceeds individual payments, allowing for separate, non-conflicting payments to different recipients. Our objective is to facilitate the concurrent execution of non-conflicting fractional spending transactions, allowing them to run in parallel. We call the resulting problem, which we define formally in the paper, the {\em fractional spending problem}. This is a challenging problem because: 
(1) we need to allow non-conflicting fractional spending transactions to be validated in parallel without requiring a common correct validator, and (2) we need to disallow such transactions from being validated if there is a potential for double spending. It should not be surprising that it is not possible to spend the whole balance with fractional spending transactions while at the same time satisfying these two
requirements and we provide a simple proof of this impossibility (Lemma \ref{lemma:notfullbalance}). To get around the impossibility, the specification of the fractional spending problem requires the concept of a {\em total spending ratio} which specifies a fraction of a balance that can be spent with fractional spending transactions in parallel without full quorums. For example, if each spending transaction spends $B_{frac}$ fraction of a balance $B$ and the total spending ratio is $s_1\times B_{frac}$, then we can have up to $s_1$ such spending transactions that can be validated in parallel and each of which requiring less than a full quorum. Additional transactions beyond the total spending ratio are not guaranteed to be validated but some of them might still be validated; the {\em buffer} between the total spending ratio and total balance ensures that double spending is not possible by preventing more than $s_2 = B/B_{frac}$ transactions from being validated and ensuring that the total spending not exceed the balance $B$. The parameters $s_1$ and $s_2$ are part of the specification of the fractional spending problem.

A naive solution to allow parallel fractional spending transactions would be to partition the server set into $s_2$ disjoint sets each containing $n/s_2$ validators and to have each fractional spending transaction be validated by one validator set from one partition. The drawback of this approach is that individual partitions have reduced fault tolerance. In fact, if the upper bound on the number of failures is $f$ and $f \geq n/s_2$, an adaptive adversary would have the power to corrupt one whole partition and the solution is not tolerant of $f$ failures. 

Our work focuses on solving the fractional spending problem by introducing parallelism in the validation of non-conflicting transactions, even in the presence of an adaptive adversary in an asynchronous system. Our approach is inherently probabilistic. In fact, we provide a simple proof that a deterministic solution is not possible (Lemma \ref{lemma:nodeterministic}). The validation of fractional spending transactions involves voting quorums drawn from a system of $n$ validators (not partitioned). This system can tolerate up to $f$ faulty validators, where crucially the quorum size may be less than $f+1$ and guaranteed \whp that double spending is not possible. Given that the full balance cannot be spent using fractional payment transactions, any solution to the fractional spending problem should provide clients with the ability to reclaim unspent amounts. Our solution provides a {\em settlement} transaction that can be invoked by the owner at any time to reclaim unspent amounts. The settlement transaction uses larger quorums because otherwise it can lead to double spending. 

At the heart of our solution is a new class of probabilistic quorum systems that we introduce in this paper and that we call $(k_{1},k_{2})$-quorum systems, where $k_{1} < k_{2}$.  Unlike traditional quorum systems (probabilistic or non-probabilistic) in which the main requirement is a lower bound on the size of the intersection of quorums, $(k_{1},k_{2})$-quorum systems have as well an {\em upper-bound} requirement. In $(k_{1},k_{2})$-quorum systems, the upper bound is on the size of the intersection between any quorum and the union of $k_1$ (or less) quorums and the lower bound is on the size of the intersection between any quorum and the union of $k_2$ (or more) quorums.  

Using a $(k_{1},k_{2})$-quorum system, a client would be able to execute with high probability $s_1 = k_1$ fractional spending transactions, each of which spends at most $1/s_2$, $s_2 = k_{2}+f/v_s$ of the client's balance ($v_s$ is the validation slack, which is introduced in the paper), so the total spending ratio is $s_{1}/s_{2}$. As we discussed above, it is possible that the client can succeed in executing more than $s_{1}$ fractional spending transactions but that is not guaranteed. 
Finally, with high probability the client cannot successfully execute more than $s_{2}$ fractional transactions so the total amount cannot exceed the client's balance.  
We propose a construction for $(k_{1},k_{2})$-quorum systems that requires $n > 8f$, which is not optimal.
We believe that exploring the design space for   $(k_{1},k_{2})$-quorum systems is a subject of independent interest, but is beyond the scope of the work. 

We propose a protocol for the fractional spending problem that supports both fractional spending transactions as well as settlement transactions. 
As we explained, settlement transactions allows an owner to reclaim unspent funds and enables the funds owner to use the reclaimed funds in subsequent transactions.
At a high level, interactions between a buyer and a seller (payer and payee) proceeds as follows: 
The buyer has a {\em fund} with an initial balance from which she issues fractional spending transactions. 
The payee contacts a quorum of validators (of size less than $f$) and waits for replies. If enough replies {\em validate} the payment, the seller is guaranteed (with high probability) that it can {\em cash} this payment at any time using a settlement transaction regardless of what other payments the buyer initiates from the fund and independently of whether or not the buyer issues a settlement transaction for the balance of the fund.
To draw a parallel with the banking system, our solution mimics what happens when paying with checks. Our quorum system can concurrently validate multiple checks from the same fund, ensuring the check to be covered, i.e., cashable by the payee. In this system, we consider a fractional payment transaction as \textit{executed} as soon as the payee gathers enough validations. This is because, with the validations, the payee can safely account for the transferred money in its account and can cash the money at any time using a settlement transaction.   Our solution is fully asynchronous and tolerates an adaptive adversary, using a $(k_{1},k_{2})$-quorums for the validation of the fractional payments and a larger quorum for the settlement transaction. 

To summarize the main contributions of this paper are the following:

\begin{enumerate}
    \item We introduce $(k_{1},k_{2})$-quorum systems, a new quorum system that has both a lower bound and an upper bound on intersection requirements. The upper bound allows multiple operations in parallel while the lower bound  limits the number of concurrent operations. The introduction of $(k_{1},k_{2})$-quorum systems is of independent interest and can potentially be applicable to other settings in which we need to allow a limited amount of concurrent activities. In particular, we expect that these systems could be applied to some classes of smart contracts in which available funds significantly exceed actual spending done in small increments.
    \item We introduce the {\em fractional spending problem} which formalizes the requirements on fractional spending transactions and their corresponding settlements.
    \item We present the first protocol that allows multiple payment transactions from any fund to be executed in parallel.
\end{enumerate}

%done

The rest of this paper is organized as follows. Section~\ref{sec:related} discusses related work. Section~\ref{sec:model} presents the system model. Section~\ref{sec:partial-spending} presents the fractional spending problem. Section  ~\ref{sec:k1k2} presents $(k_{1},k_{2})$-Byzantine quorum systems.  
Sections ~\ref{sec:protocol} and~\ref{sec:quorum-select-proofs} present our protocols. Section~\ref{sec:impossible} presents impossibility results, while Section~\ref{sec:conclude} concludes the paper.

\section{Related Work}\label{sec:related}
 The fundamental bottleneck of  blockchains is the underlying  consensus protocol used to add blocks to the replicated data structure.
 To improve blockchain scalability in the context of cryptocurrencies,  two approaches emerged: 
 asynchronous on-chain solutions that attempt to create  consensus-less blockchain protocols \cite{gupta2016,asset-transfer,abc-chain} and off-chain solutions, as channels and channels factories (e.g. \cite{poon2016, Decker15, Pedrosa19,avarikioti2021b}), that move the transaction load offline while 
resorting to a consensus-based blockchain  only for trust establishment and dispute resolution. 

Asynchronous on-chain solutions rely on the assumption that at most $f$ out of $n > 3f$ servers are Byzantine. In such solutions, which only work for single owner funds, payment transactions need to be validated by $2f+1$ validators. It is possible to batch multiple payments to different recipients in one transaction~\cite{DBLP:conf/wdag/NaorK22}, thereby achieving less than $f+1$ validations per payment, but all these payments to different sellers would need to be done at the same time; payments at different times would still need to be validated by $2f+1$ validators each. 
As for off-chain solutions, multiple channels can be used to parallelize payments to different recipients, however, all the parties need to agree, a priori, on the set of participants and  cooperate to use the channels. Cooperation is needed to open the channel (by locking funds as initial balances), update the state of the channel (by signing transactions that attest of the new allocation of balances) and close the channel by sending the last state update to the blockchain and unlock funds. Frauds are avoided  by constantly monitoring the state of the blockchain, in case one other party tries to close the channel with a state update that does not reflect all payments made on the channel. Interestingly, to limit failures and frauds, \cite{avarikioti2021b} uses a set of $n$ processes, called wardens, to proactively validate state updates associated with increasing timestamps agreed by both parties. To get a validation, a full quorum of $t=2f+1$ wardens is needed, under the assumption of at most $f$ out of the $n=3f + 1$ wardens are Byzantine and the non-Byzantine wardens are rational. 
In our solution, we support payments from the same fund to multiple recipients in parallel, 
without prior agreement with the recipients while ensuring (with high probability) no double spending  and requiring less than $f+1$ validations per transaction. In addition, our solution works in the presence of an adaptive adversary that can choose the $f$ validators to corrupt at any time, based on its entire view of the protocol including the entire communication history. 

The closest relative of the $(k_1,k_2)$-quorum systems  that we propose are $k$-quorum systems~\cite{k-quorums,byzantine-k-quorums}. Traditional $k$-quorum systems relax the intersection requirement of quorum systems. A read-quorum is not required to intersect every write-quorum, but they are required to intersect at least one quorum out of any sequence of $k$ successively accessed write-quorums. 
In systems with Byzantine failures, quorums have more than $f$ elements, and the intersection of a read quorum with the $k$ previous write quorums should be large enough to ensure that some correct servers are in the intersection ~\cite{byzantine-k-quorums}. 
The quorum system that we propose in this paper is a probabilistic system in which intersections properties (upper and lower bounds) hold with high probability. Traditional probabilistic quorum systems only have a lower bound requirement~\cite{malkhi1997probabilistic}.
Probabilistic broadcast solutions that can be applied to payment systems have been proposed by others~\cite{DBLP:journals/corr/abs-1908-01738, anikina2023dynamic}, but, to achieve higher performance, they either assume non-adaptive adversaries~\cite{DBLP:journals/corr/abs-1908-01738} or slowly adaptive adversaries and partial synchrony~\cite{anikina2023dynamic}. 

\section{System Model}\label{sec:model}
We consider an asynchronous message passing system of $n$ servers and an unbounded number of clients.  Servers act as {\em validators} for client transactions. Up to $f$ validators and any number of clients can be subject to Byzantine failures under the control of an adaptive adversary. The adversary can corrupt any validators, up to the threshold $f$, as well as any number of  clients. 
If the adversary corrupts a client or a validator, then the adversary has full control of the corrupted party including the contents of its memory before it is corrupted. We refer to the parties under the adversary's control as corrupt or faulty and to the parties not under the adversary's control as honest or correct\footnote{Let us note that the adversary does not have access to the local memory of correct parties.}. Corrupt parties can deviate arbitrarily from their protocols.  %

Message passing is asynchronous. The adversary controls message delay between validators and clients and can delay arbitrarily messages between communicating parties, but cannot delay indefinitely messages between correct parties. In particular, if a correct client or validator sends a request to all validators and waits for responses from $n-f$ validators (as is common in protocols in asynchronous systems), the sender is guaranteed to eventually receive $n-f$ replies. The adversary can select which $n-f$ responses reach the sender first so it would have to act on them without waiting for the remaining responses. 
It follows that if the client's protocol selects a random set $S$ of validators of size $\Omega(n)$, but $S$ is unknown to the adversary, and the client sends a request to all validators and is waiting for $n-f$ responses, with high probability one of the $n-f$ responses that first reach the client must be from an element of $S$. The reason is that the adversary can only guess the identities of some elements of $S$ and selectively delay their communication but cannot do so for all elements of $S$. This is consistent with the goals of our solution whose guarantees hold with high probability. Of course the probability depends on the value of $n$, so the foregoing is understood to hold for large enough $n$.

It is important to note here that we are not changing the standard asynchronous communication model in this paper. Any apparent difference is due to the difference between probabilistic and deterministic solutions. In non-probabilistic protocols, one needs to guarantee correctness in all cases independently of the actions of clients and validators, which are deterministic. In our solution, we allow correct clients to make random choices and keep those choices hidden from the adversary. This is what fundamentally  allows clients to have transactions validated by less than $f+1$ validators. As shown in Lemma \ref{lemma:nodeterministic} deterministic solutions to our problem with less then $f+1$ validations are not possible.

We assume that clients and validators use public-key signature and encryption schemes to sign and encrypt messages and that they are identified by their public keys~\cite{rivest1983method,elgamal1985public}. The public keys of validators are assumed to be known to clients. We assume, but do not explicitly show in the protocols, that messages are signed and signatures are verified. The adversary is computationally bounded and cannot break the encryption or signature schemes. In particular, the adversary cannot read encrypted messages between parties not under its control. Finally,  
parties have access 
to a Hash function in the random oracle model ~\cite{bellare1993random}. 

The adversary can monitor communication between the systems's parties (clients and validators) to determine which parties are communicating together.
We assume that a client (a seller who can have multiple payments from multiple payers) can batch messages for multiple transactions together, so that validators for different transactions are contacted at the same time and the adversary cannot tell, just by observing the communication, which of the contacted validators validate which transactions. This point is discussed further when we present the protocols.

\section{The $(s_1,s_2)$-Fractional Spending Problem}\label{sec:partial-spending}
\subsection{ Funds, Payments and Settlements}
Transactions have two clients, the payer and the payee. We refer to them as the {\em buyer} and the {\em seller}, respectively. We use  $\pkb$ to denote a buyer $b$ and $\pks$ to denote a seller $s$.
Spending is done from funds. A fund $F$ is identified by a unique fund identifier $F.\fid$. It has a balance $F.\fbl$, where $\fbl$ is a non-negative amount of money, and one owner $F.\owners$ associated with it. Funds are {\em certified} by validators. Each fund has a certificate $F.\fcert$ that consists of a set of validations where each validation has the form $(\langle F \rangle,\sigma,\pkv)$ such that $\sigma = \sign_\skv(\langle F \rangle)$, where $\langle F \rangle$ is an encoding of the fund information (id, balance and owners) and $\pkv$ and $\skv$ are the public and private keys of the validator $v$ that signed the validation. 

Depending on the size of $F.\fcert$, we distinguish between {\em fully validated} and {\em partially validated} funds. A fund is fully validated if the certificate has validations from at least $f+1$ validators. A fund is partially validated if it has less than $f+1$ validators.   
Fully validated funds can be initially fully validated through external means, or are the result of settlement transactions (discussed below). We note that our general definition of fully validated funds only ensure that one of the validators is correct, which is necessary but not sufficient to prevent double spending. In our protocols, full validation for settlement transactions requires validation by $n-2f$ validators. 

In general, a fractional spending transaction will be a 
    partially validated transaction that spends a 
    fraction of a balance, which could be specified as part
    of the transaction, up to a limit (recall that spending 
    the whole balance is not possible with partially validated
    transactions).
To keep the presentation simple, we only consider 
    fractional spending transactions that spend a fixed
    fraction of a balance. 
We define a {\em fractional spending transaction} $\tx:(F,\pkb,\pks, \text{PAY})$ as a transfer of money from a fully validated fund $F$ with $\pkb = F.\mathit{owner}$ to a recipient $\pks$. The fund {\em resulting from} a transaction $(F,\pkb,\pks, \text{PAY})$ %from $F$ 
is a partially validated fund $F'$ such that $F'.\fid$ is uniquely determined by $F.\fid$, 
$\pks$, $\pkb$ and a payment number $N_s$ which is uniquely generated by the seller for each payment transaction between $\pks$ and $\pkb$ to allow for multiple payments from the same fund from a buyer to the same seller.
The balance of $F'$ is a fraction of the balance of $F$: $F'.\fbl = F.\fbl/s_2$, for a global constant $s_2$, and $F'.\owners = \{ \pks \}$. %The value $1/k'_2$ is the {\em partial spending fraction} (defined in Section \ref{sec:k1k2}).
To keep the presentation simple, we do not support the aggregation of fractional spending transactions from separate funds. 
%We note that aggregation can be supported by separate partial spending transactions from separate funds (which can be optimized if the separate funds have the same owners). 
We note here that the model is different from the UTXO model~\cite{nakamoto2008bitcoin} in which no balance remains in the inputs used for payment. In our model,  fractional spending transactions leave a balance in the fund from which the payments are made, but the balance is not explicitly maintained. 
In addition to payment transactions, there are {\em settlement transactions}. A settlement transaction $(F,\text{SETTLE})$ specifies a fund $F$ to be settled. The fund being settled can be a fully validated or a partially validated fund, but the fund resulting from the execution of a settlement transaction is always fully validated. The identifier of the fund $F'$ resulting from executing  $(F,\text{SETTLE})$ is uniquely determined by the identifier of $F$. 
A validated settlement transaction from fund $F$ results in a fund whose owners are the same as those of $F$ and whose balance is equal to the unspent amount in $F$ and is specified more formally in the problem definition below. 

\subsection{Problem Definition}\label{sec:problem}
 
The formal problem definition below considers both progress and safety requirements for both fractional spending transactions (payment transactions) and settlement transactions. 

For settlements,  funds resulting from payments 
to honest sellers can always be settled successfully. The same is not true for funds resulting from payments to corrupt sellers. In all cases, the total balances of all the funds resulting from settling funds resulting from payments from $F$ does not exceed the balance of $F$. 

We require that the settlement amount for a fund of an honest owner be greater than or equal to the initial balance minus all payments from the fund ; it is not guaranteed to be equal because payments to corrupt sellers could be erased by the adversary.  

%Importantly, we have a non-interference requirement that stipulates if no more than $s \leq s_1$ payments are made from a fund $F$ and no settlement transaction for the fund $F$ is executed, then all $s$ transactions will be executed. This requirement is completed by a the safety requirement which stipulates that at most $s_2$ transactions can be executed from a given fund, for a total spending not exceeding the  balance's fund. 

Importantly, we have a non-interference requirement stating that if no more than $s \leq s_1$ payments are made from a fund $F$ and no settlement transaction for the fund $F$ is executed, then all $s$ transactions will be executed. This requirement is complemented by a safety requirement, which specifies that at most $s_2$ transactions can be executed from a given fund, ensuring that the total spending does not exceed the fund's balance.

Before formally introducing these  requirements, we introduce some notation. For a given fund $F$, we denote by $\funds_F$ the set of funds resulting from transactions of the form $(F,\pkb,\pks, \text{PAY})$ and we denote by $\funds^H_F$  the set of 
such funds for which the seller $\pks$ is honest. 
We denote by $\settled_F$ the set of fully certified funds resulting from transactions of the form 
$(F_{pay},\text{SETTLE})$, where $F_{pay} \in \funds_F$; i.e. the set of settlement transactions of funds resulting from spending money from $F$. 

The following holds with high probability (we use {\em with high probability}, abbreviated \whp, to denote a probability of the form $1-\negl$, where $\negl$ is a negligible function in the relevant parameters):

    (1) (\textbf{progress}) All partially validated funds with honest owners can be settled successfully: Let $F'$ be a partially validated fund resulting from transaction $(F,\pkb,\pks, \mbox{PAY})$ where $F$ is a fully validated fund and $\pks$ is honest. If $\pks$ executes transaction  $(F',\text{SETTLE})$, the transaction will terminate resulting a fully validated fund. 
    
     (2) (\textbf{safety}) If $F''$ is a fully validated fund resulting from executing $(F',\mbox{SETTLE})$ for partially validated fund $F'$, then $F''.\fbl = F'.\fbl$.  

     (3) (\textbf{safety}) There can be at most $s_2$ fractional spending transactions of the form $(F,\pkb,\pks, \mbox{PAY})$, from a given fund $F$, for a total spending not exceeding $F.\fbl$.  
    
    (4) (\textbf{safety}) Settlement amounts for payments from $F$ are subtracted from settlement for $F$:  If executing transaction $(F,\mbox{SETTLE})$ results in a fully validated fund $F_R$:
   $  F_R.\fbl \leq  F.\fbl - \sum_{F' \in \settled_F} F'.\fbl
   $ 
    
 (5) (\textbf{safety}) Payments to honest sellers are subtracted from the settlement amount: 
    If executing transaction $(F,\mbox{SETTLE})$ results in a fully validated fund $F_R$: $F_R.\fbl \leq  F.\fbl - \sum_{F' \in \funds^H_F} F'.\fbl$
   
 (6) (\textbf{safety}) If the owner of a fund $F$ is honest, the settlement amount is no less than the the initial balance of $F$ minus payments made from $F$:  
    If executing transaction $(F,\mbox{SETTLE})$ results in a fully validated fund $F_R$ and $F.\owners$ is honest:
     $  F_R.\fbl \geq F.\fbl - \sum_{F' \in \funds_F} F'.\fbl
    $

   (7) (\textbf{progress}) Non-interference: If a total of $s \leq s_1$ payment transactions are initiated from fully validated fund $F$ and no additional payment or settlement transactions are initiated by $F.\owners$, then,  every one of the $s$ transactions whose seller (payee) is honest will be validated.  

   (8) (\textbf{progress}) Successful settlement for fully certified funds: If settlement $(F,\mbox{SETTLE})$  of a fully validated fund $F$  is executed by an honest owner, the settlement will terminate.

Note that, for the non-interference requirement, the guarantee of termination for each transaction holds even if the remaining transactions are arbitrarily delayed by slow sellers for example. In other words, executing one transaction does not interfere with the completion of other transactions if the total number of transactions does not exceed the threshold $s_1$.

\section{\texorpdfstring{$(k_{1},k_{2})$}--Byzantine quorums}\label{sec:k1k2}
As we explained, $(k_{1},k_{2})$-Byzantine quorum systems are what enables our solution. 
We first introduce the properties of our new quorum systems, then we propose a construction of a particular quorum system that satisfies these properties. In this section, lemmas and theorems are given without proof. Proofs are presented in the Appendix.

\subsection{Motivation}
Clients send validation requests to a quorum set that acts as a validator set. Some of the validators might have already validated other payments and will not validate the request. Other validators might be corrupt and refuse to validate the request. Finally, due to asynchrony, the client cannot wait for all validators to respond to a given request. A request is considered validated if enough validators validate it. A request is considered not validated if enough validators deny the request. So, we will consider a request to a quorum set $Q$ validated if a fraction $\alpha$ of the validators in $Q$ validate the request. We will consider a request not validated if a fraction $\beta$ of the validators in $Q$ refuse to  validate the request. Of course $\alpha$ and $\beta$ should be chosen in such a way that a request cannot be simultaneously considered to be validated and not validated.

Given that quorums can contain less than $f$ validators, it is important that quorum selection by an honest client be randomized to prevent the adversary from corrupting all validators in the quorum. We assume that quorums are selected according to the uniform access strategy in which all quorums are equally likely to be selected. 

If the adversary is non-adaptive, it selects the $f$ fault servers prior to the commencement of the computation. This means that the number of faulty servers in a given quorum $Q$ will be $f/n \times |Q|$ in expectation and, by Chernoff bounds, the probability that the number of faulty servers deviates significantly from the expectation is negligible in $n$.

If the adversary is adaptive, the adversary's knowledge about the validators depends on whether or not the client that chooses the quorum is honest. 
If the client is honest, then even if the adversary is adaptive, the adversary would not know the identities of the validators in the randomly chosen quorum and the situation is the same as in the case of a non-adaptive adversary. In fact, in the absence of information about the identities of validators, as far as the adversary is concerned, every validator corrupted by the adversary has the same probability of being in the chosen quorum. 
If the client that chooses a quorum is corrupt, then the adversary can learn the identities of the validators and the adversary can selectively corrupt them. This is an issue for the following reason. Assume that the intersection of the chosen quorum with previously chosen quorums is large enough that the transaction would be denied. Since the adversary knows the identities of the validators, the adversary can adaptively corrupt enough validators in the chosen quorum, so that the transaction is validated instead of being denied, effectively flipping the validation decision. The number of validators that need to be corrupted limits the number of transactions with flipped decisions. This is the basis for the definition of the {\em validation slack} which we discuss below.

\subsection{Quorums definition}

\begin{figure}[t]
  \centering
  \includegraphics[width=2.4in]{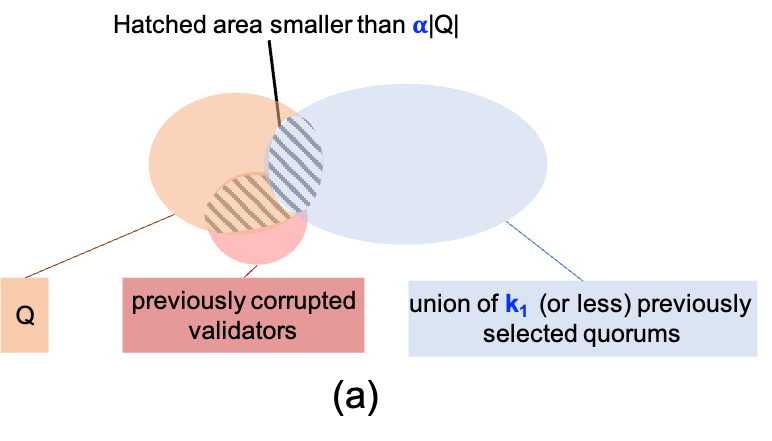} 
    \includegraphics[width=2.7in]{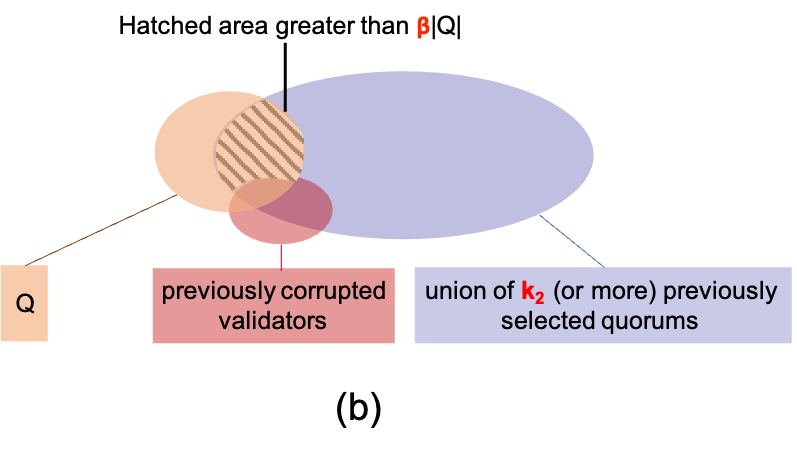}
  \caption{\it Illustration of $(k_1,k2)$-quorums requirements: (a) the size of intersection of $Q$ with up to $k_1$ selected quorums together with previously corrupted servers in $Q$ does not exceed $\alpha |Q|$; (b) the size of the intersection $Q$ with $k_2$ (or more) selected quorums  minus any previously corrupted servers in the intersection is at least $\beta |Q|$;}
  \label{fig:k1k2-illustration}
\end{figure}
We give the definition of  $(k_1,k_2)$-quorum systems as follows.
\begin{definition}
A $(k_1,k_2,\alpha, \beta, \epsilon, \delta)$-quorum system is a collection $\mathcal{Q}$ of sets such that:\\

 (1)     Upper bound  \whp:  $$\Pr_{Q , Q_j \leftarrow  \mathcal{Q}\,;\, j \in J_1; \,|J_1| \leq k_1}[|Q \cap ({\mathcal{F}_{pr}}\cup \bigcup_{j \in J_1} Q_j)| > \alpha |Q| ] \leq \epsilon$$ 
  
  (2)   Lower bound \whp:   $$\Pr_{Q , Q_j \leftarrow \mathcal{Q}\,;\, j \in J_2; \,|J_2| \geq k_2}[|(Q \cap (\bigcup_{j \in J_2} Q_j))-{\mathcal{F}_{pr}}| \leq \beta |Q|]\leq \delta$$ 

where $\mathcal{F}_{pr}$ is the set of faulty servers that existed prior to the random choice of $Q$.
\end{definition}
The lower and upper bound requirements are illustrated in Figure~\ref{fig:k1k2-illustration}. 
The upper and lower bounds ensure that up to $k_1$ accesses create no conflicts and $k_2$ or more accesses create a conflict. 
The definition takes into consideration corrupted servers who can attempt to coordinate their replies to cause the most damage, either to prevent a valid transaction from being validated (by claiming to have handled conflicting transactions) or to allow an invalid transaction to be validated (by validating conflicting transactions). 

The definition contains a number of parameters. The $\epsilon$ and $\delta$ parameters are essentially security parameters that need to be small enough (negligible, in the formal security sense) for the given application. The $\alpha$ and $\beta$ parameters are used for separating the lower and upper bounds. The main parameters from a client point of view are the $k_1$ and $k_2$ parameters, assuming $\epsilon$ and $\delta$ are negligible, hence we refer to these systems as $(k_1,k_2)$-quorum systems.

\noindent\textbf{Adaptive adversary and validation slack.}
If the adversary is adaptive and the client is corrupt, the adversary can ascertain the identities of the validators and manipulate their decisions, causing a request to be validated instead of being denied validation. In order for the adaptive adversary to reverse the decision, it will need to corrupt $(\beta - \alpha)|Q|$ servers. In fact, without the adversary selectively corrupting any servers, $\beta|Q|$ servers consisting of non-corrupt servers will deny the request with high probability.
This is the motivation for defining the {\em validation slack}.

\begin{definition}
The {\em validation slack} $v_s$ of a $(k_1,k_2,\alpha, \beta,\epsilon, \delta)$-quorum system  is $v_s = (\beta - \alpha)|Q_{min}|$, where $Q_{min}$ is the minimum quorum size. 
\end{definition}

In total, in a protocol using a $(k_1,k_2,\alpha, \beta,\epsilon, \delta)$ quorum system, it is possible for at most $k_2 + f/v_s$ accesses to be made:  $k_2$ accesses could go through without the intervention of the adaptive adversary but the additional $f/((\beta - \alpha)|Q_{min}|)$ accesses can only be achieved by adaptively corrupting validators when clients are corrupt.

\subsection{\texorpdfstring{$(k_1,k_2)$}--quorums Construction}
 In this section, we propose a simple construction of a $(k_1,k_2)$-quorum system. %to show that they can be constructed. 
 Our construction is not optimal and a more careful choice of the system parameters could yield better performance. Exploring the design space for $(k_1,k_2)$-quorum systems is beyond the scope of this work. 

\begin{definition}
An $(m,n)$ {\em uniform balanced} $(k_1,k_2)$-quorum system is a quorum systems in which each quorum has size $m$, $|U| = n = (k_1+k_2)m$ and quorums are selected uniformly at random.
\end{definition}

\newtheorem*{L1}{Lemma~\ref{lem:uniform-prob}}
\begin{lemma}\label{lem:uniform-prob}
An $(m,n)$ uniform balanced quorum system is a $(k_1,k_2)$-quorum system with $\alpha = 1/3$ and $\beta = 2/3$ has $\epsilon = \delta = \negl[n]$ and {\em validation slack} = $m/3$, if $p_f+\alpha_1 < 1/3$, where $\alpha_1 = k_1m/n$ and $p_f = f/n$.
\end{lemma}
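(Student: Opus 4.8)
The two requirements---the upper bound (1) and the lower bound (2)---are symmetric in flavor, so the plan is to handle each with a Chernoff/union bound argument on the relevant sizes of random intersections. For the upper bound, fix a quorum $Q$ of size $m$ and consider the random set $\mathcal{F}_{pr}\cup\bigcup_{j\in J_1}Q_j$ with $|J_1|\le k_1$. The key observation is that, conditioned on $Q$, each of the other $k_1$ quorums $Q_j$ is a uniformly random $m$-subset of the $n$-element universe, so the expected size of $Q\cap Q_j$ is $m^2/n = m/(k_1+k_2)$. Summing over $j\in J_1$ and adding the at most $f$ prior-faulty servers that could lie in $Q$, the expected size of the intersection is at most $k_1m^2/n + f|Q|/n \cdot(\text{something})$; more cleanly, $|Q\cap\bigcup_{j\in J_1}Q_j|$ is stochastically dominated by a sum of (negatively correlated) indicator variables with mean $\le k_1 m^2/n = \alpha_1 m$, and $|Q\cap\mathcal{F}_{pr}|$ has mean $\le p_f m$. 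The hypothesis $p_f+\alpha_1 < 1/3 = \alpha$ gives a constant multiplicative gap between this mean and the threshold $\alpha m = \alpha|Q|$, so a Chernoff bound (using negative association of the hypergeometric-type indicators to justify the bound) yields failure probability $e^{-\Omega(m)}$. Since $m = n/(k_1+k_2) = \Omega(n)$, this is $\negl[n]$, giving $\epsilon = \negl[n]$.

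For the lower bound (2), fix $Q$ and take any $J_2$ with $|J_2|\ge k_2$; it suffices to treat $|J_2| = k_2$ since adding more quorums only enlarges the union. The union $\bigcup_{j\in J_2}Q_j$ of $k_2$ independent uniform $m$-subsets covers each fixed element of the universe with probability $1-(1-m/n)^{k_2}$, so the expected size of $Q\cap\bigcup_{j\in J_2}Q_j$ is $m\bigl(1-(1-\tfrac{1}{k_1+k_2})^{k_2}\bigr)$. I would check that this quantity exceeds $\beta m = \tfrac{2}{3}m$ with a constant margin; this is where the precise split $\alpha=1/3$, $\beta=2/3$ and the requirement $n=(k_1+k_2)m$ must be reconciled (one expects the bound $1-(1-1/(k_1+k_2))^{k_2}$ to be comfortably above $2/3$ under the stated regime, perhaps needing $k_1$ not too large relative to $k_2$, which is implied by $\alpha_1 = k_1/(k_1+k_2) < 1/3$). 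Subtracting the at most $p_f m$ prior-faulty servers in the intersection still leaves a mean that beats $\beta m$ by a constant factor, and a Chernoff bound on the lower tail of this (again negatively associated) sum gives $\delta = e^{-\Omega(m)} = \negl[n]$. Finally the validation slack is $v_s = (\beta-\alpha)|Q_{min}| = (2/3-1/3)m = m/3$ directly from the definition.

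The main obstacle I anticipate is twofold. First, justifying the concentration bounds rigorously: the indicators ``element $x\in Q$ lies in $Q_j$'' are not independent (each $Q_j$ is sampled without replacement, and we sum over both $x$ and $j$), so I would invoke negative association / the fact that Chernoff-type bounds hold for sums of negatively correlated Bernoulli variables, or alternatively bound the hypergeometric tails directly and then union-bound over the $k_1$ (resp. $k_2$) quorums. Second, and more delicate, is getting the arithmetic of the margins to line up: I need the mean $\alpha_1 m + p_f m$ to sit strictly below $\tfrac{1}{3}m$ (immediate from the hypothesis) \emph{and} the lower-bound mean $m(1-(1-\tfrac{1}{k_1+k_2})^{k_2}) - p_f m$ to sit strictly above $\tfrac{2}{3}m$---the latter is not obviously implied by $p_f+\alpha_1<1/3$ alone in full generality, so I expect the proof either to use a slightly stronger consequence of the hypotheses or to observe that the regime of interest ($k_2$ moderately large, $p_f$ small) makes $(1-\tfrac{1}{k_1+k_2})^{k_2}\le e^{-k_2/(k_1+k_2)}$ small enough. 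Once both margins are constant and positive, the union bound over polynomially-many quorums against an exponentially small per-quorum failure probability closes the argument, and the $\Omega(n)$ quorum size converts $e^{-\Omega(m)}$ into $\negl[n]$.
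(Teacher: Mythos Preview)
Your upper-bound argument is exactly the paper's: compute the expected size of $Q\cap(\mathcal{F}_{pr}\cup\bigcup_{j}Q_j)$ as at most $(\alpha_1+p_f)m$, note the constant gap to $m/3$ from the hypothesis $\alpha_1+p_f<1/3$, and apply an upper-tail Chernoff bound. (The paper does not discuss negative association; it simply invokes Chernoff.)

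For the lower bound the two approaches diverge. The paper does \emph{not} compute the coverage probability $1-(1-m/n)^{k_2}$ as you do. Instead it asserts that the expected number of servers of $Q$ lying in the union of $k_2$ previously selected quorums is $(k_2m/n)|Q|=(1-\alpha_1)m$, then subtracts at most $p_f m$ faulty and gets $(1-(\alpha_1+p_f))m>2m/3$ directly from $\alpha_1+p_f<1/3$; a lower-tail Chernoff with $r'=1-\frac{2/3}{1-(\alpha_1+p_f)}$ finishes. In other words, the paper treats the union as if the $k_2$ quorums were disjoint, so the needed margin above $\beta m$ falls out of the same inequality $\alpha_1+p_f<1/3$ that drove the upper bound.

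Your more careful computation---$\mathbb{E}[|Q\cap\bigcup_j Q_j|]=m\bigl(1-(1-\tfrac{1}{k_1+k_2})^{k_2}\bigr)$---is the honest one, and you are right to flag that it does not obviously exceed $2m/3$ from the stated hypothesis alone (indeed, for large $k_1+k_2$ with $\alpha_1$ near $1/3$ it hovers near $1-e^{-2/3}\approx 0.49$). So the obstacle you anticipated is real: the paper's argument buys its clean arithmetic by effectively upper-bounding the expected intersection where a lower bound is needed, whereas your route is tighter but, as you suspected, cannot close with the hypothesis as stated. The validation-slack computation $(\beta-\alpha)m=m/3$ is identical in both.
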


\subsection{\texorpdfstring{Asynchronous $(k_1,k_2)$}--quorum systems}

The calculations and definition of $(k_1,k_2)$-quorum systems implicitly assume that all servers in a quorum reply to a request from a client. 
In order to access a quorum in an asynchronous system, the access should allow for non-responding corrupt servers that cannot be timed out. Assuming that the corrupt servers in a given quorum are randomly chosen,
the expected number of corrupt servers in a quorum of size $m$ is at most $m p_f$, where $p_f = f/n$ is an upper bound on the probability that a randomly selected server has been previously corrupted. With high probability, the number of corrupt servers in a quorum is less than $(1+\mu)m p_f$ for any constant $\mu$, $0 < \mu < 1$, and large enough $m$. In other words, the probability that the number of corrupt servers in a quorum exceeds $(1+\mu) m p_f$ is a negligible function of $m$ (and therefore of $n$). It follows that, when contacting a random quorum that is not known to the adversary, an honest client can wait for replies from $m(1-(1+\mu) m p_f)$ servers and be guaranteed \whp to receive that many replies. 
Of those replies, $m(1-2(1+\mu) m p_f)$ are guaranteed \whp to be from non-corrupt servers. 
These considerations affect our definition for $(k_1,k_2)$-quorums in asynchronous systems. The intersection and non-intersection properties should allow for the exclusion of $(1+\mu) m p_f$ servers, which could be correct.  We revise the definition as follows.

\begin{definition}
A $(k_1,k_2,\alpha, \beta, \epsilon, \delta, \mu)$- {\em asynchronous} quorum system is a collection of $\mathcal{Q}$ of sets such that $\forall Q_s \,: |Q_s| \leq (1+\mu) p_f m$, the following holds. \\

   (1) Upper bound:  $$ \Pr_{Q , Q_j \leftarrow  \mathcal{Q}\,;\, j \in J_1; \,|J_1| \leq k_1}[|(Q-Q_s) \cap ({\mathcal{F}_{pr}}\cup \bigcup_{j \in J_1} Q_j)| > \alpha |Q| ] \leq \epsilon$$

    (2) Lower  bound: $$
    \Pr_{Q , Q_j \leftarrow  \mathcal{Q}\,;\, j \in J_2; \,|J_2| \geq k_2}[|((Q-Q_s) \cap (\bigcup_{j \in J_2} Q_j))-{\mathcal{F}_{pr}}| \leq \beta |Q| ] \leq \delta$$

where $\mathcal{F}_{pr}$ is the set of faulty servers that existed prior to the random choice of $Q$.
\end{definition}

This definition requires the intersection properties to hold even if we exclude any set $Q_s$ of size up to $(1+\mu) m p_f$ servers from the intersection. 
%Again, the definition has too many parameters. 
We want $\mu$ and $\alpha$ and $\beta$ to be specific constants for which  $\epsilon$ and $\delta$ are small enough. 
Accordingly with this revised definition, we provide the following construction for the asynchronous case.

\newtheorem*{L2}{Lemma~\ref{lem:uniform-prob-asynch}}
\begin{lemma}\label{lem:uniform-prob-asynch}
An $(m,n)$ uniform balanced $(k_1,k_2)$-quorum system is a $(k_1,k_2, \epsilon, \delta, \alpha = 1/3, \beta = 2/3, \mu > 0 )$-asynchronous quorum system with $\epsilon = \delta = \negl[m]$ and {\em validation slack} = $m/3$, if $n > 8f$ and $k_1m/n < 1/24$.
\end{lemma}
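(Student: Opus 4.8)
The plan is to reduce the asynchronous statement to a collection of Chernoff/union-bound estimates on hypergeometric-type random variables, exactly as one would for the synchronous Lemma~\ref{lem:uniform-prob}, but carrying through the extra slack of $(1+\mu)p_f m$ servers that the asynchronous definition allows to be removed from each intersection. First I would fix notation: a quorum $Q$ is a uniformly random $m$-subset of a universe $U$ of size $n=(k_1+k_2)m$; write $p_f=f/n<1/8$ and $\alpha_1=k_1m/n<1/24$. I will treat the two bounds separately and show each fails with probability $\negl[m]$; the lemma then follows by a union bound over the (constantly many, or at worst $\mathrm{poly}(m)$) relevant quorums.

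For the \textbf{upper bound}, note that $\mathcal{F}_{pr}\cup\bigcup_{j\in J_1}Q_j$ has size at most $f + k_1 m = (p_f+\alpha_1)n$. Conditioned on the adversary's choice of $\mathcal{F}_{pr}$ and on any choice of the $\le k_1$ quorums $Q_j$ (which, since they are chosen by possibly-corrupt clients, I must allow to be arbitrary subsets of size $m$ — this is the key modeling point, and it is fine because the \emph{target set} $T:=\mathcal{F}_{pr}\cup\bigcup_j Q_j$ is then a fixed set of size $\le (p_f+\alpha_1)n$ before $Q$ is drawn), the quantity $|Q\cap T|$ is hypergeometric with mean $\le (p_f+\alpha_1)m$. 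Since $p_f+\alpha_1 < 1/8+1/24 = 1/6 < 1/3$, there is a constant multiplicative gap to $\alpha|Q| = m/3$, so the probability that $|Q\cap T|$ (let alone $|(Q\setminus Q_s)\cap T|\le|Q\cap T|$) exceeds $m/3$ is $\exp(-\Omega(m))$ by the Chernoff bound for the hypergeometric distribution. I would state the bound for a single fixed $T$ and then absorb the enumeration over adversarial choices into the negligible function, being a little careful that the number of distinct target sets an adversary can realize over a whole execution is at most $2^n$-ish, which is still dominated once we push $m$ large — or, cleaner, I would phrase the claim per fixed set of accesses, matching how the quorum-system definition quantifies.

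For the \textbf{lower bound}, the object is $|((Q\setminus Q_s)\cap\bigcup_{j\in J_2}Q_j)\setminus\mathcal{F}_{pr}|$ with $|J_2|\ge k_2$. Here the adversary does \emph{not} get to pick the $Q_j$ adversarially in a way that shrinks the union: the $Q_j$ are honestly-chosen random quorums (the lower bound is the guarantee we need when honest clients access the system), so $\bigcup_{j\in J_2}Q_j$ with $k_2$ random $m$-subsets of an $n=(k_1+k_2)m$ universe covers, in expectation, a $1-(1-m/n)^{k_2}\ge 1-e^{-k_2/(k_1+k_2)}$ fraction of $U$; with $k_1m/n<1/24$ this coverage fraction is bounded below by a constant comfortably above $2/3$ plus $2p_f+(1+\mu)p_f$, and concentrates w.h.p. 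Then, conditioned on the realized union $W$ (a fixed set of size $\ge c\,n$ for that constant $c$), $|Q\cap W|$ is again hypergeometric with mean $\ge cm$; subtracting the $\le f$ elements of $\mathcal{F}_{pr}$ and the $\le(1+\mu)p_f m$ elements of $Q_s$ still leaves $\ge (c - p_f - (1+\mu)p_f)m$ in expectation, and the constants are arranged (this is where $n>8f$ and $k_1m/n<1/24$ are spent) so that this exceeds $\beta m = 2m/3$ with a constant multiplicative margin; Chernoff finishes it. The \textbf{validation slack} claim $v_s=(\beta-\alpha)|Q_{\min}| = (2/3-1/3)m = m/3$ is then immediate from the definition since all quorums have size exactly $m$.

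The main obstacle — and the step I would write most carefully — is the lower bound's handling of the union $\bigcup_{j\in J_2}Q_j$: I must get a clean \emph{high-probability} lower bound on $|W|$ that survives being intersected with an independent random $Q$ and then eroded by both $\mathcal{F}_{pr}$ and $Q_s$, and I need to check that the two places where slack is introduced in the asynchronous model ($Q_s$ of size $(1+\mu)p_f m$, and the fact that $Q_s$ could consist of \emph{correct} servers, so it genuinely removes good elements from the intersection) still leave daylight above $2/3$. Concretely I expect the arithmetic to come down to verifying something like $1-e^{-k_2/(k_1+k_2)} - (2+\mu)p_f > 2/3$ and $(p_f+\alpha_1) < 1/3$ with the given numerical constraints $p_f<1/8$, $\alpha_1<1/24$; I would present these as two displayed inequalities, check them at the boundary, and note that $\mu$ can be taken small enough that the second coverage inequality holds strictly, which is exactly why the statement reads ``$\mu>0$'' rather than fixing $\mu$. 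Everything else is a routine union bound over the finitely-many quorums named in the definition, each contributing a $\negl[m]$ term.
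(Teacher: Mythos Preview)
Your upper-bound argument is the paper's: removing $Q_s$ can only shrink the intersection, so the synchronous Chernoff estimate carries over unchanged, and the paper disposes of this case in one sentence. (The aside about union-bounding over ``$2^n$-ish'' adversarial target sets would actually fail, since $n=(k_1+k_2)m$ so $2^n e^{-\Theta(m)}$ is not negligible; but you already retreat to the per-fixed-access reading, which is what the definition and the paper use.)

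The lower bound is where you diverge from the paper, and your route has a genuine gap. You propose to lower-bound the coverage $|W|/n$ for $W=\bigcup_{j\in J_2}Q_j$ by $c:=1-e^{-k_2/(k_1+k_2)}$ and then verify $c-(2+\mu)p_f>2/3$. But $n=(k_1+k_2)m$ forces $k_2/(k_1+k_2)\le 1$, so $c\le 1-e^{-1}\approx 0.632$ for \emph{every} admissible choice of parameters; with $\alpha_1<1/24$ you get $c\approx 1-e^{-23/24}\approx 0.62$. That is already below $\beta=2/3$ before any subtraction, and after removing $(2+\mu)p_f\ge 1/4$ you are near $0.37$. The inequality you say you would ``check at the boundary'' is therefore false everywhere, not merely tight. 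The paper does not pass through a coverage bound on $|W|$ at all: it simply reuses the synchronous estimate, taking $(k_2 m/n)\,m=(1-\alpha_1)m$ as the expected size of $Q\cap\bigcup_j Q_j$ and subtracting the additional $(1+\mu)p_f m$ that the asynchronous definition allows, which yields $(1-\alpha_1-(2+\mu)p_f)m>2m/3$ under $p_f<1/8$, $\alpha_1<1/24$ and small $\mu$, and then applies the Chernoff lower tail. Whether one regards the paper's use of $k_2m/n$ (the summed, not union, hit probability) as fully justified is a separate question, but that is the argument the paper gives; your coverage-first decomposition, while more scrupulous about what $\bigcup_j Q_j$ actually is, cannot reach $\beta=2/3$ with these constants and so does not establish the lemma.
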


\section{Fractional Spending and Settlement Protocols}~\label{sec:protocol} 

\subsection{Protocol Overview}
We assume that we have a fully validated fund to be spent with fractional spending transactions. 
To make a payment, a validation request is sent to all the validators in the quorum and when enough replies are received, the payment is considered validated (partially). To settle a transaction we require full validation. A settlement request for a fully validated fund (from which payments could have been made) is sent to all validators. Every validator broadcasts to all other validators information about payments that it is aware of and then this information is collected by all validators who would then deduct these payments from the balance of the fund resulting from the settlement. To settle a partially validated fund resulting from a payment transaction, validators need to ensure that the transaction was properly validated. 
Many details are needed to make this work. 
We start with some solution considerations.

{\bf Seller should select the quorum:} 
The reason why the seller should be the one selecting the quorum has to do with the fact that their interest is in ensuring that the transaction can be successfully settled whereas a (corrupt) buyer's interest is in erasing records of payments, which can be done if the adversary learns the identities of the validators in the quorum, and the buyer would then be able to settle without the payment counting against the fund's balance. 

{\bf Protecting the seller during settlement:} 
During settlement, the seller needs to fully validate the settlement transaction and therefore needs to divulge the identities of the validators that validated the transaction. 
The seller cannot send the information to all validators because it is possible for one corrupt validator to receive the information before everyone else. At that point, the adversary learns the identities of the validators and corrupt them to erase their record of the transaction. 
A similar scenario can occur during the settlement of a fully validated fund $F$ when validators communicate with each other to determine if there are any fractional spending transactions from $F$ that need to be subtracted from the balance owed. 
To prevent this from happening, divulging the identities of validators is done using  secret sharing~\cite{beimel2011secret} in two stages so that when  the adversary learns the information, it is too late to erase it. 

{\bf Limiting the number of validators:} 
The seller should not get validations from many validators because that prevents other payments from the same fund. To limit the number of validations, we have the seller commit to the identities of the validators (using a hash function) and the buyer signs these commitments that are limited in number. The seller presents the signed commitments to the validators when it validates the transaction. 

 In what follows we assume, for simplicity, that an asynchronous $(k_1,k_2)$-quorum system construction is known to all participants with the parameters $k_1$, $k_2$, $\alpha$, $\beta$, $\mu$ and $m$ available as global constant values. The rest of this section presents an asynchronous solution to $(s_1,s_2)$-fractional spending in the presence of an adaptive adversary (where $s_1 = k_1$ and $s_2 = k_2 + f/v_s$).

\begin{algorithm}[t]
    \caption{Fractional spending: Buyer's Protocol}
    \label{buyer-partial-payment}
    \begin{algorithmic}[1]    
        \Procedure{BuyerFractionalSpend}{$\pks$} 
          
            \State $\tid :=  \langle F, \pkb ,\pks \rangle$ \color{blue}\Comment{transaction id} \color{black}
            \State \textbf{send} $(\mbox{PAY},\tid)$ \textbf{to} $\pks$ \color{blue}\Comment{ payment message with tx Id} \color{black}
           
            \State \textbf{wait} \color{blue} \Comment{wait until commitments received}\color{black}
            \State \textbf{until} $(\mbox{QUORUM},  \tid,h_s, Q_c = [c_i])$ \rcvd \textbf{from} $\pks$

                \If{$|Q_c| = m$} \color{blue}\Comment{if quorum has the correct size,}\color{black}

                    \State  \textbf{send} $(\mbox{SIGNED\_Q},\tid,[\sign_\skb(\tid||h_s||c_i)]$) to $\pks$  

                \Else \color{blue}\Comment{sign and send response, }\color{black}
                    \State \textbf{return $\bot$} \color{blue}\Comment{otherwise abort}\color{black}

                \EndIf
        \EndProcedure
    \end{algorithmic}
\end{algorithm}

\begin{algorithm}[t]
    \caption{Fractional spending: Seller's Protocol}
    \label{seller-partial-payment}
    \begin{algorithmic}[1]  
        \Procedure{SellerFractionalSpending}{$\tid$}

\item[]
                \color{blue}\LineComment{seller selects $m$ validators}\color{black}
            \State $[v_i] = Q_{\tx} \gets \Call{SelectQuorum}{\tid, N_s \leftarrow \{0,1\}^r}$

        \item[]
            \State $[N_i] \leftarrow [(\{0,1\}^r)^m]$  \color{blue}\Comment{blinding nonce for i'th validator}\color{black}\label{Ns}
            \State $[c_i] \gets \Call{H}{v_i||N_i}$ \color{blue}\Comment{ commitment to i'th validator identity}\color{black}
            \State $h_s = \Call{H}{N_s}$ \color{blue}\Comment{commitment to $N_s$}\color{black}

  \item[]
         \color{blue}\LineComment{send commitments to buyer and wait for signatures, then}
         \LineComment{send signatures with nonces to validators}\color{black}
            \State $\send (\mbox{QUORUM},\tid,h_s,[c_i]) \textbf{ to } \pkb$ 
        \State \textbf{wait until} $(\mbox{SIGNED\_Q},\tid,[\sigma_i])$ \rcvd \textbf{from} $\pkb$         
        \ForAll{$v \in Q_{tx}$} 
            \State $\send \langle \tid,h_s,\sigma_i,N_i \rangle \textbf{ to } v$                    
        \EndFor

          \item[]
    \color{blue}\LineComment{wait for replies from validators while keeping track of}
    \LineComment{which validators replied and which are witnesses}\color{black}      
		\State $\mathit{replies} = \mathit{witnesses} = \emptyset$  
        \Repeat 
            \If{$\rcvd \mathit{resp} \textbf{ from } v \in Q_{tx} \wedge v \not\in \mathit{replies}$} 
                \State $\mathit{replies} = \mathit{replies} \cup \{\mathit{v}\} $ 
            \If{$\mathit{resp}  = \langle \valid, \tid,h_s, \sigma = \sign_\skv(\tid||h_s)\rangle$} 
                \State $\mathit{witnesses} = \mathit{witnesses} \cup (\mathit{resp},v) $ 
            \EndIf
            \color{black}
            \EndIf
        \Until{$|\mathit{replies}| \geq  m - (1+\mu) p_f m$} 
        
\item[]
        \color{blue}\LineComment{If enough validators validated, return certificate for}
        \LineComment{fractional spending transaction}\color{black}
        \If{$|\mathit{witnesses}| \geq (1 - \alpha) \times m$}         
            \State \textbf{return } $\langle \tid, N_s,\mathit{witnesses}\rangle $
        \Else
            \State \textbf{return $\bot$}
        \EndIf
        \EndProcedure
    \end{algorithmic}
\end{algorithm}

\subsection{Fractional Spending Protocols}
The fractional spending protocols for the buyer and seller are shown in Algorithms~\ref{buyer-partial-payment} and~\ref{seller-partial-payment} respectively. The validator code relating to fractional spending is shown in Algorithm~\ref{validator-payment}. 
The code closely follows the solution overview above. 
In the code we assume that the quorum parameters including the validation slack $v_s$ are fixed and known by all parties and therefore the fractional amount to be paid from fund $F$, which is equal to  $F.\fbl/s_2 = F.\fbl/(k_2+f/v_s)$, needs not be explicitly shown in the code.

\subsubsection{Buyer's Code} 
The buyer sends a PAY message specifying the transaction identifier  
and waits for a response from the seller
which will be a vector of commitments $[c_i]$ to the identities of a quorum of validators. 
The buyer checks that the quorum size is $m$ and sends a reply that includes for each validator  a signature of $\tid||c_i$ to link the commitment to the transaction.
\begin{algorithm}[t]
    \caption{Validator $v$ Payment Validation}
    \label{validator-payment}
    \begin{algorithmic}[1] 
     \If{ $\,\,\,\,\,\,\,\,\,\,\,$\textbf{received} $\langle \tid,h_s,\sigma,N \rangle$ \textbf{from} $\pks$ 
     
   $\wedge$
$(\tid.\pkb \in \tid.F.\owners)$ $\wedge$
$ (\tid.\pks = \pks)$

 $\wedge$  
     $\tid.F \not\in \mathit{settle}$ $\wedge \,$ 
$(\tid.F \not\in \mathit{validated\_fund})$ 

$\wedge$      
     \Call{VerifySig}{$\tid.\pkb,\tid||h_s||\textproc{H}(v||N)),\sigma$} }
     
     \State $\mathit{validated\_fund} = \mathit{validated\_fund} \cup F$
     \State $\mathit{validated\_txs} = \mathit{validated\_txs} \cup \langle \tid, h_s,\sigma, N \rangle$
     \State \textbf{send} $\langle \valid, \tid ,h_s, \sign_\skv(\tid||h_s) \rangle$ \textbf{to} $\pks$  
     \Else
     \State \textbf{send} $\langle \invalid, \tid \rangle$
        \textbf{to} $\pks$      
    \EndIf
    \end{algorithmic}
\end{algorithm}

\subsubsection{Seller's Code}
The seller's \textproc{SellerFractionalSpending()} function is executed in response to a PAY message from the buyer and 
takes the $\tid$ of the PAY message as argument. The seller starts by calling \textproc{SelectQuorum()} (see Section~\ref{sec:quorum-select-proofs}) using the buyer's transaction identifier and a randomly generated nonce $N_s$. The call  returns a quorum of validators $Q_\tid$ represented as a vector  $[v_i]$.
The seller generates $m$ $r$-bit nonces to generate a vector of commitments  
$[c_i] = [H(v_i||N_i)]$ to validators identities, which it sends to the buyer, and waits to receive from the buyer the signed commitments. Then the seller sends to each validator a validation request that includes the transaction identifier $\tid$, the signature $\sigma_i$ (which should be equal to 
$\sign_\skb(\tid||c_i)$), and the nonce $N_i$ used in generating the commitment. It then waits to receive replies from  $m-(1+\mu)p_fm$ validators and checks if $(1-\alpha) m$ validators validated the transaction. If so, the transaction is validated and the replies from validators that validated the transaction constitute a {\em certificate} of validation for the transaction. In the code, we do not explicitly represent the resulting fund, but we note  that the transaction, and therefore the fund that it creates, is specified by
the transaction identifier $\tid = \langle F, \pkb, \pks \rangle$ and $hs$, the seller's commitment to the nonce $N_s$.  

\subsubsection{Validator's Code}
The validator checks  the validity of the request and sends the result to the seller. A request is valid if: the fund of the payment is not settled; the seller of the transaction is the one making the request; the buyer of in the transaction is an owner of the fund of the transaction; the fund specified by the transaction has not been previously validated by the validator; the signature 
$\sigma$ in the request is a valid signature for $\tid||h_s||H(v||N)$ by the buyer $\tid.\pkb$ specified in the request. The validation of the signature ensures that the seller committed to the validator's identity for the specific transaction.
If the request is valid, the validator adds the fund to the set of validated funds and the transaction to the list of validated transactions. The list of validated funds is needed to ensure that a validator validates only one fractional payment from a given fund. It is used during settlement of buyer's fund $F$ to ensure that all fractional spending from $F$ will be accounted for. That is why $\sigma$, $c$ and $N$, which are needed to validate a transaction, are stored together with $\tid$. It is important to note  that a validator does not learn the identities of other validators validating a payment request. 

\subsection{Settlement Protocols}\label{sec:settle}
As we explained in the overview, during settlement we need to prevent the adversary from learning the identities of the validators prematurely and corrupting them. 
The settlement algorithms use a \textproc{Propagate()} protocol that allows an honest party to propagate information without the adversary having the ability to suppress the information being propagated. We outline the \textproc{Propagate()} protocol first (the details and code can be found in the Appendix), then we present the settlement protocols.  

\subsubsection{Propagating Information}
The identities of individual validators  involved 
in validating a particular transaction are kept secret by the validators themselves (if honest) or by the seller until settlement time. At settlement time, the identities of the validators are revealed, so we need to ensure that when the adversary learns the identities of the validators, it will not be able to erase the information about the transaction. The \textproc{Propagate()} protocol allows a  party (seller or validator) to send a message to all validators so that the adversary would either
have to corrupt the sending party to learn the message or, if the adversary learns the message without corrupting the party, then all but $2f$ honest validators are also guaranteed to learn the message. The \textproc{Propagate()} protocol is implemented using secret sharing and reconstruction.  
To distinguish between messages propagated by different calls to \textproc{Propagate()}, a unique  nonce $N_{prop}$ is generated for the call and provided as a second argument to \textproc{Propagate()}. When a message from client $c$ is finally propagated to a server, the server will have $\mathit{message}[c,N_{prop}]$ equal to the propagated message.

\subsubsection{Seller's Settlement and Corresponding Validation} 
The protocols for the seller's settlement and corresponding validation are shown in Algorithm \ref{seller-settlement}  and Algorithm \ref{validator-seller-settlement}. The seller  propagates a settlement request to validators. The fund is fully specified by $\tid = \langle F, \pks,\pkb \rangle$ and the nonce $N_s$ of the fractional spending transactions that created the fund. 
The certificate of the fund consists of the set of $\mathit{payment\_witnesses}$ that validated the fractional spending transaction. The information $\tid, N_s, \mathit{payment\_witnesses}$ enables a validator of the settlement transaction to recalculate the quorum used in validating the transaction and to verify that the witnesses
are members of that quorum. The validator checks that the quorum and validations provided by the seller are correct and that the validation for the transaction are received from a large enough subset of validators. In addition, to avoid double spending, the validator checks that either it has no record of a settlement transaction for $F$ ($F \not\in \mathit{settle}$) or that the payment transaction being settled, represented as $(\tid, h_s = \Call{H}{N_s})$ was added to $\mathit{transactions}[F]$ which is the set of payment transactions from $F$ (calculated when the buyer settles $F$). 
As shown in the proofs in the Appendix, if the buyer settles $F$, every payment to an honest seller will be added to  
$\mathit{transactions}[F]$ which ensures that this condition will be satisfied for honest sellers.
If all the information checks out, the validator sends a validation for the settlement. The seller waits until it receives $n-f$ validations. 
These $n-f$ validations guarantee that only one fund can result from settling a partially validated fund. 

\begin{algorithm}[t]
    \caption{Seller Fund Settlement}
    \label{seller-settlement}
    \begin{algorithmic}[1] 
        \Procedure{SellerSettle}{ $\tid = \langle  F, \pkb,\pks \rangle$, $N_s$,
        
        \hspace*{44mm}$\mathit{payment\_witnesses}$ }
       \State $N_{settle} \leftarrow \{0,1\}^r$
     \State \Call{Propagate}{$\langle \tid, N_s, \mathit{payment\_witnesses},\text{SETTLE}\rangle$, $N_{settle}$}  
        \Repeat 
            \If{$\rcvd (\textbf{valid},\sigma, N_{settle} )$ \textbf{ from } $v$}
            \State $\mathit{Id}_1 = \Call{H}{\tx||N_s||\mbox{PAY}}$;
            \hspace{3ex} $\mathit{Id}_2 = \Call{H}{\mathit{Id}_1||\mbox{SETTLE}}$
            \State $F'.\fid = Id_2 \,;\,  F'.\fbl =  F.\fbl/k'_2 \,;\, F'.\owners = \{\pks\}$
            \If{$\sigma=\sign_v(\langle F'.\fid, F'.\fbl, F'.\owners\rangle$}
                \State $\mathit{settle\_witnesses} = \mathit{settle\_witnesses} \cup  \{(v,\sigma)\}$
            \EndIf
            \EndIf
        \Until{$|\mathit{settle\_witnesses}| \geq n-f$}
        \State $F'.\fcert = \mathit{settle\_witnesses}$
        \State $\textbf{return } (\langle F'\rangle)$
       \EndProcedure
    \end{algorithmic}
\end{algorithm}

\begin{algorithm}[t]
    \caption{Code of Validator $v$ for Seller Fund Settlement}
    \label{validator-seller-settlement}
    \begin{algorithmic}[1] 
           \If{
           $\mathit{message}[\pks,N_{settle}] = \langle \tid, N_s, \mathit{payment\_witnesses},$
           
           \hspace*{58mm}$\text{SETTLE}\rangle$  
           
        $\,\,\,\,\,\wedge$ 
        $Q = \Call{SelectQuorum}{\tid, N_s}$ 
       
        $\,\,\,\,\,\wedge$  
		$((\tid, \Call{H}{N_s}) \in \mathit{transactions}[F] \vee F \not\in \mathit{settle})$ 
  
       $\,\,\,\,\,\wedge$ 
       $\Call{ValidatedTransaction}{\tid,\mathit{payment\_witnesses}}$ 
       
       $\,\,\,\,\,\wedge$ 
       $Q' = \{ q\,:\, (\mathit{resp},q) \in \mathit{payment\_witnesses}\} \subseteq Q$ 

       $\,\,\,\,\,\wedge$ 
        $|Q'| \geq m-m\mu p_f$ }
            \State $\mathit{transactions}[F] = \mathit{transactions}[F] \cup \{((\tid, \Call{H}{N_s})\}$
       		\State $\mathit{Id}_1 = \Call{H}{\tx||N_s||\mbox{PAY}}$
            \State $\mathit{Id}_2 = \Call{H}{\mathit{Id}_1||\mbox{SETTLE}}$
            \State $F'.\fid = Id_2 \,;\,  F'.\fbl =  F.\fbl/k'_2 \,;\, F'.\owners = \{\pks\}$
            \State \send $(\textbf{valid}, \sign_v(\langle F'.\fid, F'.\fbl, F'.\owners\rangle, N_{settle})$ to $\pks$
            \EndIf
           
    \end{algorithmic}
\end{algorithm}
\subsubsection{Buyer's Settlement and Corresponding Validation} To settle a fund $F$, we need to make sure that all fractional spending from $F$ are deducted from the settled balance. We assume that there is only one settlement transaction that is executed for the fund and that the buyer is engaged in its execution. The buyer starts by sending a settlement request to all validators. Every validator that receives the request {\em propagates} information about any payments from $F$ that it is aware of (there can only be at most one such payment per honest validator). A validator will provide proof that it witnessed a payment or states that it did not witness any payment. The proof of a witnessed payment is $(\tid, \sigma, N_h)$ where $\sigma = \sign_{\pkb}(H(\tid||N_h))$ is the blind signature and $N_h$ is the blinding factor. Validators wait until $n-f$ different messages about payments from $F$ are propagated. Every validator then counts the number of different payment transactions from $F$ that it heard about, calculates the resulting balance after subtracting the amounts for those payments and send the buyer a signed balance for the settlement fund. The buyer waits until it receives $n-2f$ signatures for the same balance which will form the set of witnesses for the settlement fund resulting from $F$. The code can be found in the Appendix.

\subsection{Performance}
A simple inspection of the code shows that validating fractional payment transactions requires a linear number of messages because the only communication is between the seller and the validators, which is optimal. Also, the latency is one roundtrip message delay between the seller and the validators.
Settlement requires a quadratic number of messages. The settlement latency is higher requiring three rounds of message exchanges between validators. 

\section{Random Quorum Selection}
\label{sec:quorum-select-proofs}
The $(k_1,k_2)$-quorum systems definition assumes that a quorum can be chosen randomly, so we need to specify how quorums can be chosen randomly by the seller and how to prevent corrupt sellers from fixing the membership of the chosen quorum.

\begin{algorithm}[!t]
    \caption{Random selection of a quorum of size $m$: Seller's's Code ($\mathit{pks}$)}
    \label{random-quorum-selection}
    \begin{algorithmic}[1] 
        \Function{SelectQuorum}{$\langle F, \pkb , \pks \rangle , N_s$}

    \State $T_s = \langle F, \pkb , \pks \rangle ||N_s$ \color{blue}\Comment{seller's transaction identifier}\color{black}\label{Tunique}
       \State $h = H(T_s)$ 
     \State  $Q = \{ \}$ ; $j = 1$ ; 
     \While{$|Q|< m$}
           \If{$\mathit{Server}(H(h||j)) \not\in Q$} \color{blue}\Comment{if not already selected}\color{black}\label{line:star}
        \State $Q = Q \cup \{\mathit{Server}(H(h||j))\}$ \color{blue}\Comment{add server to quorum}\color{black}
              \EndIf
        \State $j = j+1$
     \EndWhile
     \State \Return $Q$ 
	           \EndFunction
    \end{algorithmic}\label{alg:quorum-select}
\end{algorithm}

The protocol (Algorithm~\ref{alg:quorum-select})  allows the seller to choose a quorum of $m$ validators and ensures that if the seller is honest, the quorum of validators is chosen uniformly at random and is not known to the adversary or to the buyer. %The proofs for that   
The arguments to the algorithm are a buyer-provided transaction identifier $\tid = \langle F, \pkb, \pks \rangle$ that ties the chosen quorum to the two parties and the specified fund, and a seller-generated random $r$-bit string $N_s$,  where $r$ is a security parameter. The transaction's identifier is concatenated  (denoted with $||$) with $N_s$   
to obtain a seller transaction identifier $T_s$, which is 
used as a seed for quorum selection. This seed is 
guaranteed to be unique \whp if the seller is not corrupt. 
The goal is to select $m$ different servers. The seller uses the seed $h$ concatenated with an index $j$ to select the validator $\mathit{Server}(H(h||j))$, where $\mathit{Server}$ is a function that maps the output of $H$ to server identifiers. Since the number of possible validators is $n$, it is possible that the same validator is chosen twice for different values of $j$, so the seller tries successive values of $j$ until $m$ distinct validators are chosen.   
The algorithm returns the identities of the selected validators.  
This information is all that is needed to verify later that a particular quorum of validators was properly chosen according to the protocol. In fact, the set of chosen validators is a deterministic function of the protocol arguments, 
but this requires knowledge of $N_s$ without which an adversary cannot guess the identities of the validators. 

 It is important to point out that the randomness of the quorum is ensured by using $\langle F, \pkb , \pks \rangle ||N_s$ as a seed for quorum selection and that a corrupt seller cannot reuse an old seed to double spend because this would result in the same seller transaction identifier $T_s$.
One final consideration is ensuring random selection for a  corrupt seller that attempts to run the algorithm multiple times in the hope of maximizing the number of previously corrupted validators in the chosen quorum or to present a different $N_s$ at validation time to ensure that the quorum contains a large number of corrupt validators. Since the quorum system is asynchronous, \whp there are no more than $(1+\mu)p_fm$ previously corrupted validators in a randomly chosen quorum. If the seller wants to increase the number to $(1+2\mu)p_fm$, for example, the seller should make an exponential number of attempts (exponential in $\mu p_fm$) to choose a quorum. So, we assume that the quorum system is such that \whp a randomly chosen quorum does not contain more than $(1+\mu/2)p_fm$ previously corrupted validators, and that $(\mu/2)p_fm/2$ is large enough so that \whp the computationally bounded corrupt seller cannot chose a quorum with more than $(1+\mu)p_fm$ previously corrupted validators.

The quorum selection protocol satisfies the following properties:

\newtheorem*{L3}{Lemma~\ref{lem:quorum-random}}
\begin{lemma}\label{lem:quorum-random}
If the seller is correct, the quorum of validators is chosen uniformly at random.
\end{lemma}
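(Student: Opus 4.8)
The plan is to show that \textproc{SelectQuorum} on a correct seller is equivalent to sampling $m$ distinct validators uniformly at random, by tracing the randomness through the two hash invocations and appealing to the random oracle model. First I would observe that the seed $T_s = \langle F,\pkb,\pks\rangle\,||\,N_s$ contains the seller's freshly sampled $r$-bit nonce $N_s \leftarrow \{0,1\}^r$; since the seller is correct it picks $N_s$ uniformly and, as already noted in the text, $T_s$ is unique \whp (no prior call used this seed), so the query $H(T_s)$ has not been made before and $h = H(T_s)$ is, in the random oracle model, a fresh uniformly distributed string independent of everything else.

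Next I would argue that the sequence of selection queries $H(h||1), H(h||2), \dots$ likewise consists of fresh random-oracle queries: because $h$ is uniform and independent, \whp it was never used as a prefix in any earlier query, so each $H(h||j)$ returns an independent uniform value, and hence each $\mathit{Server}(H(h||j))$ is an independent draw from the distribution that $\mathit{Server}$ induces on the $n$ validator identifiers. Assuming (as is standard for such a mapping) that $\mathit{Server}$ distributes the oracle's output uniformly over the $n$ identifiers, the while-loop performs rejection sampling: it keeps drawing i.i.d.\ uniform validators and discards repeats until $m$ distinct ones have been collected. A standard fact about sampling-without-replacement-via-rejection is that the resulting set of $m$ distinct elements is uniformly distributed over all $\binom{n}{m}$ size-$m$ subsets; I would state this and give the one-line symmetry argument (any particular ordered sequence of $m$ distinct validators is equally likely, so any unordered set is). This yields the uniform-access distribution assumed by the $(k_1,k_2)$-quorum system definition. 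Finally, secrecy of the quorum from the adversary and the buyer follows because the only way to compute any $\mathit{Server}(H(h||j))$ is to query the oracle at $h||j$, which requires knowing $h = H(T_s)$, which in turn requires knowing $N_s$; the correct seller never reveals $N_s$ before validation, and a computationally bounded adversary guesses an $r$-bit string with only negligible probability.

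The main obstacle is handling the rejection-sampling step cleanly: one must confirm that the loop terminates \whp (it does, since with $n = (k_1+k_2)m$ the collision probability per draw is at most $m/n$, a constant bounded away from $1$, so the expected number of iterations is $O(m)$ and the tail is geometric) and, more importantly, that conditioning on the loop's history does not bias which $m$-subset comes out. The subtlety is that the \emph{number} of draws is itself random and correlated with which validators appeared; the clean way to see uniformity is to note that, conditioned on the final multiset of draws containing exactly $m$ distinct values, every ordered sequence of draws consistent with that event is equally likely, and summing over orderings the marginal on the distinct set is uniform. I would also remark that the negligible-probability events — $T_s$ colliding with an earlier seed, $h$ colliding with an earlier prefix, the adversary guessing $N_s$ — are all absorbed into the ``\whp'' qualifier, consistent with the paper's convention, so the lemma holds in the same high-probability sense as the quorum-system guarantees themselves.
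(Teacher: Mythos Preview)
Your proposal is correct and follows essentially the same reasoning as the paper's own argument, which is the informal prose surrounding Algorithm~\ref{alg:quorum-select} (the paper states that proofs are in the Appendix, but in the provided text no separate formal proof of this lemma appears there; the surrounding discussion is the proof). Your version is in fact more careful than the paper: you make explicit the random-oracle freshness of both $H(T_s)$ and the subsequent $H(h||j)$ queries, you justify that the rejection-sampling loop yields a uniform $m$-subset via the standard symmetry argument, and you address termination. One minor scoping remark: your final sentences about secrecy from the adversary and buyer are really the content of Lemma~\ref{lem:unknown-validators}, not the present lemma, so they belong there rather than here.
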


\newtheorem*{L4}{Lemma~\ref{lem:unknown-validators}}
\begin{lemma}\label{lem:unknown-validators}
If the seller is correct, the identities of the correct validators in the chosen quorum are not known to the adversary or to the buyer.
\end{lemma}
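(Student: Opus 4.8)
The plan is to reduce the adversary's ability to learn the identity of a correct validator in the chosen quorum to its ability to either (a) predict the output of the random oracle $H$ on a fresh input, or (b) guess the seller's nonce $N_s$. Recall from Algorithm~\ref{alg:quorum-select} that the quorum is $Q = \{\mathit{Server}(H(h\|j)) : j \in J\}$ where $h = H(T_s)$, $T_s = \langle F,\pkb,\pks\rangle\|N_s$, and $N_s$ is a fresh $r$-bit string drawn by the correct seller. First I would observe that the only place $N_s$ (and hence $h$, and hence the membership of $Q$) enters the protocol messages before settlement is through $h_s = H(N_s)$, which the seller sends to the buyer, and through the per-validator messages $\langle \tid, h_s, \sigma_i, N_i\rangle$ the seller sends to each $v \in Q$; the validators themselves never learn $N_s$ or the set $Q$ (as noted in the validator-code discussion). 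So the adversary's view that could possibly depend on $Q$ is: the commitment $h_s$, the buyer's signed commitments $[\sigma_i]$, the traffic pattern, and — for any validator the adversary has \emph{already} corrupted — the tuples that validator received.

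Next I would argue each of these leaks nothing about the correct members of $Q$. The commitments $c_i = H(v_i\|N_i)$ with fresh nonces $N_i$ are, in the random oracle model, uniformly random strings independent of $v_i$ unless the adversary queries $H$ on $v_i\|N_i$; since $N_i$ is an $r$-bit secret, this happens only with negligible probability, so $[c_i]$ and the buyer's signatures on them reveal nothing about which validators were chosen. The value $h_s = H(N_s)$ is likewise a random string independent of $h = H(T_s)$ and of $Q$ unless the adversary queries $H$ on $N_s$, which again requires guessing an $r$-bit secret. For the traffic-pattern point I would invoke the batching assumption from Section~\ref{sec:model}: because a seller batches messages for multiple transactions, the adversary cannot tell from the communication alone which contacted validators belong to which transaction's quorum. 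Finally, for validators corrupted \emph{before} the seller ran \textproc{SelectQuorum}: these are in $\mathcal{F}_{pr}$, they are not correct, so learning that such a validator lies in $Q$ does not violate the statement, which only concerns correct validators. For validators corrupted \emph{after}: to decide whom to corrupt the adversary would have to already know $Q$, which is the circular dependence the argument rules out — formally, one shows by a hybrid/union-bound argument over the adversary's (polynomially many) $H$-queries and corruption requests that the probability the adversary ever names a correct member of $Q$ before that member is contacted is negligible.

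The step I expect to be the main obstacle is making the ``circular dependence'' argument rigorous: the adversary is adaptive and sees the seller's outgoing messages to validators in $Q$, so in a sense it \emph{can} observe who is contacted. The resolution I would push on is that (i) by Lemma~\ref{lem:quorum-random} the quorum is uniform, so before any message is sent each correct validator is equally likely to be in $Q$; (ii) the seller sends its per-validator messages to \emph{all} validators (or batched so as to be indistinguishable from doing so, per the model), so the \emph{act} of sending does not single out $Q$; and (iii) the content $\langle\tid,h_s,\sigma_i,N_i\rangle$ destined for $v_i$ is, by the commitment argument above, indistinguishable from a random message, so even a recipient-specific message does not let the adversary confirm that a given correct validator is a genuine quorum member versus a decoy. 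Chaining these, the adversary's advantage in outputting a correct element of $Q$ is bounded by its probability of guessing one of the $r$-bit secrets $N_s, N_1,\dots,N_m$, which is negligible for the security parameter $r$; the same bound applies to the buyer, whose view is a subset of the adversary's.
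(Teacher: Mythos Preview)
The paper never actually supplies a formal proof of this lemma in the appendix; the argument is left implicit in the discussion around Algorithm~\ref{alg:quorum-select}, whose essence is that the quorum is a deterministic function of $\langle F,\pkb,\pks\rangle\|N_s$, that a correct seller keeps $N_s$ private (exposing only $h_s=H(N_s)$), and that the batching assumption of Section~\ref{sec:model} neutralises traffic analysis. Your proposal follows exactly this line and fills in the supporting details the paper omits---random-oracle hiding for $h_s$ and for the commitments $c_i=H(v_i\|N_i)$, the observation that a validator learns nothing about the other quorum members, and the ``once corrupted, no longer correct'' point that disposes of adaptive corruptions. In that sense your argument is both correct and more complete than what the paper provides.

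One phrasing issue in your step~(ii)/(iii) is worth fixing: the seller does \emph{not} send messages to all validators, and there are no ``decoy'' messages in the protocol; the \textbf{for all} loop in Algorithm~\ref{seller-partial-payment} sends only to the $m$ members of $Q_{\tx}$. The actual protection against traffic analysis is solely the batching assumption you already cite---the seller is assumed to have several concurrent transactions and contacts the union of their quorums at once, so the adversary sees a set of contacted validators without learning the partition by transaction. Your hedge ``or batched so as to be indistinguishable from doing so, per the model'' is the operative clause; drop the ``sends to all validators'' and ``decoy'' language, and note that message \emph{content} to honest validators is hidden by encryption (from the model), not by the commitment structure, so point~(iii) as stated is not the load-bearing step.
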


\begin{lemma}\label{lem:corrupt-validator-ratio}
The probability that a randomly selected quorum of size $m$ contains $(1 + \mu)p_fm$ previously corrupt validators, $0 < \mu < 1$ is at most $e^{-\mu^2 \times p_f m/(2+\mu)}$.
\end{lemma}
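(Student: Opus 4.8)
The statement to prove is Lemma~\ref{lem:corrupt-validator-ratio}: the probability that a randomly selected quorum of size $m$ contains $(1+\mu)p_f m$ previously corrupt validators is at most $e^{-\mu^2 p_f m/(2+\mu)}$.

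The plan is to set this up as a standard Chernoff-bound calculation on the number of previously corrupt validators landing in a uniformly random quorum. First I would fix the set $\mathcal{F}_{pr}$ of previously corrupt validators, with $|\mathcal{F}_{pr}| \leq f$, and consider the quorum $Q$ chosen uniformly at random among all $m$-subsets of the $n$ validators. Let $X = |Q \cap \mathcal{F}_{pr}|$ be the random variable counting previously corrupt validators in $Q$. Then $X$ is a sum of negatively associated indicator variables $X_i$, one per corrupt validator, each with $\Pr[X_i = 1] = m/n = p_f$ (using $p_f = f/n$ and treating $|\mathcal{F}_{pr}| = f$ as the worst case), so $\mathbb{E}[X] = f \cdot m/n = p_f m$.

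Next I would apply the multiplicative Chernoff upper-tail bound. Because the $X_i$ are negatively associated (sampling without replacement), the Chernoff bound that holds for independent Bernoulli variables applies verbatim. With mean $\mu_X := \mathbb{E}[X] = p_f m$ and deviation parameter $\mu$, the bound $\Pr[X \geq (1+\mu)\mu_X] \leq e^{-\mu^2 \mu_X/(2+\mu)}$ is exactly the textbook form of the Chernoff bound valid for all $\mu > 0$. Substituting $\mu_X = p_f m$ gives $\Pr[X \geq (1+\mu) p_f m] \leq e^{-\mu^2 p_f m/(2+\mu)}$, which is the claimed bound. If one prefers to avoid invoking negative association, the same inequality follows from Hoeffding's bound for sampling without replacement, or by a direct moment-generating-function argument comparing the hypergeometric distribution to the binomial.

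I do not expect a serious obstacle here; the only point requiring a little care is justifying that the concentration inequality for independent Bernoulli trials transfers to the without-replacement (hypergeometric) setting of choosing a quorum of $m$ distinct validators. This is handled by the standard fact that hypergeometric sampling is negatively associated and hence dominated by the binomial in the upper tail, so I would state this fact with a citation (e.g., to the usual treatment of Chernoff/Hoeffding bounds) and then the bound is immediate. A minor secondary point is that $|\mathcal{F}_{pr}|$ may be strictly less than $f$; in that case the mean is smaller and the bound only improves, so taking $|\mathcal{F}_{pr}| = f$ is the worst case and the stated inequality still holds.
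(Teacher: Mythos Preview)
Your approach is correct and matches the paper's: a direct application of the multiplicative Chernoff upper-tail bound to the count of previously corrupt validators in a uniformly random $m$-subset, with the negative-association remark handling the without-replacement sampling (the paper itself does not spell out a proof of this lemma beyond invoking Chernoff bounds, so your write-up is more detailed than what appears there). One small slip to fix: you write $\Pr[X_i=1]=m/n=p_f$, but $p_f=f/n$, not $m/n$; your subsequent expectation computation $\mathbb{E}[X]=f\cdot(m/n)=p_f m$ is nevertheless correct.
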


The following Lemma follows directly from Lemma~\ref{lem:corrupt-validator-ratio}. 
\newtheorem*{L5}{Lemma~\ref{lem:quorum-corrupt-seller}}
\begin{lemma}\label{lem:quorum-corrupt-seller}
For $0 < \mu < 1$, if a seller chooses $K$ quorums, of size $m$ each, at random, then with probability at most $Ke^{-\mu^2 \times p_f m/(2+\mu)}$, every chosen quorums has no more than $(1+\mu)p_fm$ previously corrupt validators.
\end{lemma}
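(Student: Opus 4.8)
The plan is to obtain this immediately as a union bound over Lemma~\ref{lem:corrupt-validator-ratio}; read in its equivalent form, the claim is that the probability that \emph{some} one of the $K$ chosen quorums contains more than $(1+\mu)p_f m$ previously corrupt validators is at most $K e^{-\mu^2 p_f m/(2+\mu)}$. First I would enumerate the $K$ quorum selections $Q_1,\dots,Q_K$, each a set of $m$ validators drawn by the seller under the uniform access strategy, and let $E_t$ be the event that $Q_t$ contains at least $(1+\mu)p_f m$ validators that were already corrupt at the moment $Q_t$ was drawn.

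Next I would invoke Lemma~\ref{lem:corrupt-validator-ratio} once for each $t$: since at the time of the $t$-th draw the set of already-corrupt servers has size at most $f$, the quantity $p_f = f/n$ remains a valid upper bound on the probability that a uniformly chosen server is corrupt, so $\Pr[E_t] \leq e^{-\mu^2 p_f m/(2+\mu)}$, and this holds conditionally on the outcomes of all earlier draws because the Chernoff estimate underlying Lemma~\ref{lem:corrupt-validator-ratio} depends only on the marginal law of a single uniform quorum and on the current corrupt-set size, not on the history. A union bound then gives $\Pr\!\left[\bigcup_{t=1}^{K} E_t\right] \leq \sum_{t=1}^{K} \Pr[E_t] \leq K\, e^{-\mu^2 p_f m/(2+\mu)}$, and taking complements yields the statement.

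The only point needing care — and the nearest thing to an obstacle — is the adaptive growth of $\mathcal{F}_{pr}$ between successive draws: against an adaptive adversary the set of previously corrupt validators relevant to $Q_t$ may be strictly larger than the one relevant to $Q_{t-1}$, so the $p_f m$ factor in each application of Lemma~\ref{lem:corrupt-validator-ratio} must be the bound corresponding to the corrupt set at that draw. I would make this explicit and observe that, since that set never exceeds $f$, the per-draw tail bound is never weakened and the union bound goes through with no independence assumption between draws. This also transparently covers a corrupt seller who re-runs the selection on fresh nonces $N_s$: in the random-oracle model each such run produces another fresh uniform quorum, i.e. simply one more term $E_t$ in the same union.
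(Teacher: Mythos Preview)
Your proposal is correct and matches the paper's approach: the paper simply states that the lemma ``follows directly from Lemma~\ref{lem:corrupt-validator-ratio},'' i.e.\ exactly the union bound you spell out. Your additional care about the adaptive growth of $\mathcal{F}_{pr}$ between draws and the lack of any independence assumption is more explicit than the paper's one-line justification, but the underlying argument is the same.
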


Proofs can be found in the Appendix.

\section{Impossibility Proofs}\label{sec:impossible}
We show that a deterministic solution cannot satisfy all the problem's requirements and that no solution that uses less than full quorums can spend the whole amount and avoid double spending.

\begin{lemma}\label{lemma:nodeterministic}
If all payments are partially certified (certified by quorums of size less than $f+1$), then a deterministic solution to the Fractional Spending problem cannot guarantee progress \whp for correct buyers and sellers.
\end{lemma}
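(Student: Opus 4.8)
The plan is to derive a contradiction from assuming a deterministic solution that validates every fractional spending transaction using only quorums of size $\leq f$, by constructing two executions that a correct seller cannot distinguish but in which the adversary has corrupted disjoint validator sets. First I would fix a fully validated fund $F$ with a single honest owner, and consider the non-interference requirement (requirement (7)) with $s = 1$: a single payment transaction $(F,\pkb,\pks,\text{PAY})$ to an honest seller $\pks$ must be validated, so the seller's protocol must terminate having collected validations that constitute a certificate. Because the protocol is deterministic and the seller is correct, the set of validators the seller contacts and the certificate it accepts are a deterministic function of the protocol messages; in particular the quorum $Q$ it relies on has $|Q| \leq f$.

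The core step is an indistinguishability/covering argument. Since the protocol is deterministic, once we fix the inputs the certificate-bearing quorum $Q$ is determined; it contains at most $f$ validators, so the adversary is able to corrupt all of $Q$. Now consider two scenarios. In the first, all validators are honest and the seller obtains its certificate from $Q$ as prescribed; in the second, the adversary corrupts exactly the validators in $Q$ before the computation starts and has them behave toward the seller exactly as honest validators would in scenario one (they answer the seller's validation request affirmatively), while simultaneously the buyer is corrupt and, colluding with these same validators, runs a second payment from $F$ that also uses $Q$ (or uses the identical deterministic quorum-selection outcome), so that the same $\leq f$ corrupt validators validate a conflicting transaction and then, when a settlement is executed, suppress or rewrite their records. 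The seller in scenario two sees precisely the same messages as in scenario one, so by determinism it must again declare the payment validated; yet the fund can now be double-spent, or the honest owner's settlement in requirement (6) can be violated, because no correct validator ever witnessed the payment. The point is that with quorums of size $\leq f$, no certificate accepted by a correct seller is forced to contain a correct validator, and determinism removes the only escape hatch — randomized, adversary-hidden quorum choice — that would make such full corruption improbable.

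To make this airtight I would phrase it as: suppose for contradiction the deterministic protocol guarantees progress \whp (indeed with probability $1$, since there is no randomness) for correct buyers and sellers. Apply requirement (7) to get that the single honest-seller payment is validated via some quorum $Q$, $|Q|\le f$. Build the adversarial execution above in which $\mathcal{F}_{pr}=Q$, the buyer is corrupt, and the corrupt parties mimic honest behavior on the seller-facing channel; by determinism the seller's view is identical to a failure-free execution, so it validates, producing a fund $F'$ that by requirement (1) must later settle successfully. But then arrange a second conflicting payment (or simply have the corrupt buyer settle $F$ while the corrupt validators deny ever seeing the payment), and check that one of the safety requirements (3), (5), or (6) is violated — the total settled amount exceeds $F.\fbl$, or the honest owner's settlement fails to account for / incorrectly accounts for the payment — contradicting that all requirements hold \whp.

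The main obstacle I anticipate is pinning down exactly which safety requirement breaks and ensuring the two executions are genuinely indistinguishable to the correct seller despite the settlement phase: the settlement protocols involve $n-f$ or $n-2f$ validations and inter-validator communication, so I must argue the adversary can keep the seller's local view unchanged while still corrupting all of $Q$ — this is where I would lean on the fact that $Q$ is only $\le f$ validators and that, settlement being initiated by the honest seller only after its (deterministically fixed) certificate is in hand, the adversary has full freedom to schedule and forge the settlement-side messages from the corrupt $Q$ without the seller ever detecting a difference. A secondary subtlety is making sure the deterministic quorum-selection step cannot itself smuggle in hidden randomness: since the seller is correct and follows a deterministic protocol on fixed inputs, the selected quorum is a fixed set, which is precisely what lets the adversary target it in advance.
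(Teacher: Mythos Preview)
Your core insight is right and matches the paper's: in a deterministic protocol the validator quorum $Q$ for a given transaction is a fixed function of the parties' inputs, so once $|Q|\le f$ the adversary can corrupt all of $Q$. Where you diverge is in what you do with that observation. You build a two-execution indistinguishability argument and aim to violate one of the \emph{safety} requirements (3), (5), or (6). The paper instead goes straight for \emph{progress}: it makes the buyer corrupt, observes that the (corrupt) buyer can compute $Q$ from the transaction data it already holds, has the adversary corrupt all of $Q$, and then simply notes that the honest seller cannot successfully settle the resulting fund --- a direct violation of requirement~(1). No indistinguishability, no second conflicting payment, no case analysis on which safety clause breaks.

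This buys the paper exactly the simplifications you flagged as obstacles. You worried about ``pinning down exactly which safety requirement breaks'' and about keeping the seller's view identical through the settlement phase; the paper's route sidesteps both, because it never needs the seller to accept a certificate in a second execution --- it just needs settlement to fail in one execution. One small gap in your write-up worth tightening: you assert the adversary can corrupt $Q$ ``before the computation starts,'' but you should say \emph{how} the adversary learns $Q$. The paper's answer is that the corrupt buyer already has every input the seller's deterministic quorum-selection uses, so the adversary computes $Q$ through the buyer; your ``fixed inputs $\Rightarrow$ fixed set'' phrasing gestures at this but does not name the channel.
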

\begin{proof}
In a deterministic solution, the identities of the validators for a given transaction is solely determined by the buyer and seller identifiers $\pkb$ and $\pks$, the history of transactions at the buyer and seller and the initial states of the buyers and sellers. Consider an execution in which all payments are made by one buyer and assume that the buyer pays the seller with a fractional spending transaction. The identity of the validators that will be used by the seller can be calculated by the buyer because it has all the information needed to do that. If the buyer is corrupt, the adversary can learn the identities of the validators for the transaction and corrupts all of them because they number less than $f+1$. The seller can then be prevented from successfully settling the fund resulting from the transaction.
\end{proof}

\begin{lemma}\label{lemma:notfullbalance}
If all quorums used in validation have size less than $f+1$, then it is not possible to prevent double spending and allow the spending of the whole balance of a fund.
\end{lemma}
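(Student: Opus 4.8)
\textbf{Proof proposal for Lemma~\ref{lemma:notfullbalance}.}

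The plan is to derive a contradiction from assuming that some solution both (a) uses only quorums of size at most $f$ for all validations and (b) permits the entire balance $B$ of a fund to be spent through such transactions while guaranteeing no double spending. The key observation is the standard quorum-intersection argument run in reverse: two quorums of size at most $f$ drawn from $n > 3f$ validators need not intersect at all, and even when they do, an adaptive adversary controlling $f$ validators can make the intersection consist entirely of faulty validators. So the core of the proof is to exhibit an execution in which two disjoint (or faulty-intersection) quorums each fully validate a transaction spending the whole balance, to two distinct honest sellers, with neither validator set able to detect the other.

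First I would set up the scenario: a single fully validated fund $F$ with honest owner (buyer) $\pkb$ and balance $B$, and two honest sellers $\pks^{(1)}, \pks^{(2)}$. Suppose, for contradiction, that there is a protocol meeting (a) and (b). By (b), there is an execution $E_1$ in which $\pkb$ pays the whole balance $B$ to $\pks^{(1)}$ and this transaction gets validated by some quorum $Q_1$ with $|Q_1| \le f$. Symmetrically there is an execution $E_2$ validating a whole-balance payment to $\pks^{(2)}$ by a quorum $Q_2$, $|Q_2| \le f$. Now I would construct a hybrid execution $E$ as follows: the (corrupt) buyer behaves toward the validators in $Q_1$ exactly as in $E_1$ and toward the validators in $Q_2$ exactly as in $E_2$; since communication is asynchronous and the adversary schedules messages, all other validators' messages are delayed. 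The point to nail down is that the validators in $Q_1 \setminus Q_2$ cannot distinguish $E$ from $E_1$ (they see the same messages from the buyer, the same absence of messages from everyone else), so they validate the payment to $\pks^{(1)}$; symmetrically for $Q_2 \setminus Q_1$. For the validators in $Q_1 \cap Q_2$, I use the adaptive adversary: since $|Q_1 \cap Q_2| \le |Q_1| \le f$, the adversary corrupts all of them and has each one send a ``valid'' reply in $E$'s view to $\pks^{(1)}$ and, separately, a ``valid'' reply to $\pks^{(2)}$ — something a correct validator that validated only one payment could not do, but a corrupt one can. Then both honest sellers collect full validating quorums, both payments are considered executed, and the total spending is $2B > B$ — a double spend, contradicting (b).

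The main obstacle I anticipate is making the indistinguishability argument airtight in an \emph{asynchronous} model while respecting whatever commitment/signing structure the protocol might impose (the actual protocols in this paper, for instance, have the seller commit to the quorum and the buyer sign those commitments). The clean way around this is to keep the lemma's hypothesis minimal — it only assumes quorum size $< f+1$, nothing about the protocol's message flow — and to observe that whatever messages the honest sellers and honest validators exchange in $E_1$ (resp.\ $E_2$) can be faithfully replayed in $E$ because the only party whose behavior spans both halves is the corrupt buyer and the (at most $f$) corrupt validators in the intersection, all of whom are under adversary control and can behave inconsistently across the two halves. A secondary subtlety is the case $|Q_1 \cap Q_2| = 0$: then no adaptive corruption is even needed, the two executions simply compose, and the argument is if anything easier. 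I would present the nonempty-intersection case as the general one and remark that the disjoint case is subsumed. I expect the whole argument to be short — a few sentences — mirroring the brevity of Lemma~\ref{lemma:nodeterministic}'s proof.
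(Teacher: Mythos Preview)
Your two-execution hybrid is a natural first attack, but it rests on an assumption that does not hold in the setting this lemma targets: you posit an execution $E_1$ in which a \emph{single} transaction, validated by a \emph{single} quorum $Q_1$ with $|Q_1|\le f$, transfers the whole balance $B$. In the fractional-spending model each transaction moves only a fixed fraction of the balance, so exhausting it takes many transactions, each with its own small quorum. The union of those quorums can contain far more than $f$ validators, and an honest validator contacted in both halves of your hybrid $E$ will refuse to validate a second payment from the same fund. To force both halves through you would have to corrupt every validator that appears in both a seller-1 quorum and a seller-2 quorum, and there is no bound of $f$ on the size of that set. Your argument, as written, establishes the lemma only for protocols that permit a single small-quorum transaction to drain the fund --- which is not the regime the paper is addressing.

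The paper takes an asymmetric route that avoids this blow-up. The buyer first spends the entire balance honestly via $s$ fractional transactions $T_1,\ldots,T_s$ to honest sellers, with \emph{no} validators corrupted; by the hypothesis that the whole balance can be spent, all of these succeed. Only afterward does the adversary corrupt the buyer and issues one additional fractional transaction $T$ to a \emph{corrupt} seller. Because both endpoints of $T$ are adversarial, the adversary learns the identities of the at most $f$ validators validating $T$ (whether contacted directly or indirectly) and corrupts them all, so $T$ is validated and can be settled. The honest sellers settle $T_1,\ldots,T_s$ for a total of $B$, and $T$ adds a positive amount on top: double spending. The economy of the argument is that only a single quorum's worth of validators ever needs to be corrupted, regardless of how many quorums were touched in spending $B$ legitimately.
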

\begin{proof}
We consider an execution in which a buyer spends the whole balance of a fund $F$ by executing $s$ fractional spending transactions, $T_1,T_2,\ldots,T_s$, for some $s$, to honest sellers who validate and accept the payments. No validators are corrupted during the executions of these $s$ fractional spending transactions. After these fractional spending transactions are validated, the adversary corrupts the buyer who issues a fractional spending transaction $T$ to a corrupt seller. To validate the transaction $T$, less than a total of $f+1$ validators are contacted. In a general solution, some of the validators for a transaction are contacted directly by the seller and buyer and some of them could be contacted indirectly by other validators. Since the adversary controls the buyer and seller, it knows the identities of the validators that are directly contacted and can corrupt them because there are at most $f$ validators involved in validating the transaction. In turn, and for the same reason, this allows the adversary to learn the identities and corrupt any validators that are used to validate the transaction whether they are directly or indirectly contacted by the seller and buyer. Since all the validators for transaction $T$ can be corrupted by the adversary, the transaction can be successfully validated and successfully settled.   
Since the transactions $T_1,T_2,\ldots,T_s$ are to honest sellers, according to the problem requirements, the sellers for these transactions should be able to successfully settle the funds resulting from them, totalling the whole initial balance of $F$. So, the total of all transactions that are settled is equal to initial balance of $F$ plus the amount for the transaction $T$, a total spending that exceeds the original balance $F$, a double spending.
\end{proof}

\section{Conclusion}\label{sec:conclude}
We introduced a novel problem called the fractional spending problem and a novel quorum system that enables our solution.  We showed how to use the quorum system to execute payment transactions with less than $f$ validations per transaction. By carefully considering lower bound and upper bounds in the intersection properties of the quorum system, we are able to allow up to $s_1=k_1$ non-interfering payments in parallel and prevent double spending by limiting the number of such concurrent transactions. We solved the problem in an asynchronous system under an adaptive adversary. 
The work leaves a number of important questions unanswered. The construction we proposed for the quorum works for large values of $n$. It is not clear how to modify the construction to work for smaller values of $n$ and larger values of $f$ relative to $n$, or how to reduce the difference between $k_1$ and $k_2$. This is a subject for future work. Also, the current solution requires validators to maintain a history of past transactions which results in space complexity proportional to the number of transactions. Finally, the solution requires the seller to interact with the buyer to blindly validate the list of validators. Eliminating this interaction is desirable to improve the system efficiency. This is subject for future work.
\bibliography{biblio}

\begin{thebibliography}{10}

\bibitem{k-quorums}
Amitanand Aiyer, Lorenzo Alvisi, and Rida~A Bazzi.
\newblock On the availability of non-strict quorum systems.
\newblock In {\em International Symposium on Distributed Computing}, pages
  48--62. Springer, 2005.

\bibitem{byzantine-k-quorums}
Amitanand~S Aiyer, Lorenzo Alvisi, and Rida~A Bazzi.
\newblock Byzantine and multi-writer k-quorums.
\newblock In {\em International Symposium on Distributed Computing}, pages
  443--458. Springer, 2006.

\bibitem{anikina2023dynamic}
Veronika Anikina, João~Paulo Bezerra, Petr Kuznetsov, Liron Schiff, and Stefan
  Schmid.
\newblock Dynamic probabilistic reliable broadcast.
\newblock 2023.
\newblock \href {http://arxiv.org/abs/2306.04221} {\path{arXiv:2306.04221}}.

\bibitem{avarikioti2021b}
Zeta Avarikioti, Eleftherios Kokoris-Kogias, Roger Wattenhofer, and Dionysis
  Zindros.
\newblock Brick: Asynchronous incentive-compatible payment channels.
\newblock In {\em International Conference on Financial Cryptography and Data
  Security}, pages 209--230. Springer, 2021.

\bibitem{beimel2011secret}
Amos Beimel.
\newblock Secret-sharing schemes: A survey.
\newblock In {\em International conference on coding and cryptology}, pages
  11--46. Springer, 2011.

\bibitem{bellare1993random}
Mihir Bellare and Phillip Rogaway.
\newblock Random oracles are practical: A paradigm for designing efficient
  protocols.
\newblock In {\em Proceedings of the 1st ACM Conference on Computer and
  Communications Security}, pages 62--73, 1993.

\bibitem{cachin2002asynchronous}
Christian Cachin, Klaus Kursawe, Anna Lysyanskaya, and Reto Strobl.
\newblock Asynchronous verifiable secret sharing and proactive cryptosystems.
\newblock In {\em Proceedings of the 9th ACM Conference on Computer and
  Communications Security}, pages 88--97, 2002.

\bibitem{Decker15}
Christian Decker and Roger Wattenhofer.
\newblock A fast and scalable payment network with bitcoin duplex micropayment
  channels.
\newblock In {\em Proceedings of the 17th International Symposium on
  Stabilization, Safety, and Security of Distributed Systems - Volume 9212},
  page 3–18, 2015.

\bibitem{elgamal1985public}
Taher ElGamal.
\newblock A public key cryptosystem and a signature scheme based on discrete
  logarithms.
\newblock {\em IEEE transactions on information theory}, 31(4):469--472, 1985.

\bibitem{fischer1985impossibility}
Michael~J Fischer, Nancy~A Lynch, and Michael~S Paterson.
\newblock Impossibility of distributed consensus with one faulty process.
\newblock {\em Journal of the ACM (JACM)}, 32(2):374--382, 1985.

\bibitem{scalable-reliable-broadcast}
Rachid Guerraoui, Petr Kuznetsov, Matteo Monti, Matej Pavlovic, and
  Dragos{-}Adrian Seredinschi.
\newblock Scalable byzantine reliable broadcast.
\newblock In {\em 33rd International Symposium on Distributed Computing, {DISC}
  2019}, volume 146 of {\em LIPIcs}, pages 22:1--22:16, 2019.

\bibitem{DBLP:journals/corr/abs-1908-01738}
Rachid Guerraoui, Petr Kuznetsov, Matteo Monti, Matej Pavlovic, and
  Dragos{-}Adrian Seredinschi.
\newblock Scalable byzantine reliable broadcast (extended version).
\newblock {\em CoRR}, abs/1908.01738, 2019.
\newblock URL: \url{http://arxiv.org/abs/1908.01738}, \href
  {http://arxiv.org/abs/1908.01738} {\path{arXiv:1908.01738}}.

\bibitem{asset-transfer}
Rachid Guerraoui, Petr Kuznetsov, Matteo Monti, Matej Pavlovi\v{c}, and
  Dragos-Adrian Seredinschi.
\newblock The consensus number of a cryptocurrency.
\newblock PODC '19, page 307–316, 2019.

\bibitem{gupta2016}
Saurabh Gupta.
\newblock A non-consensus based decentralized financial transaction processing
  model with support for efficient auditing.
\newblock Master's thesis, Arizona State University, June 2016.

\bibitem{malkhi1997probabilistic}
Dahlia Malkhi, Michael Reiter, and Rebecca Wright.
\newblock Probabilistic quorum systems.
\newblock In {\em Proceedings of the sixteenth annual ACM symposium on
  Principles of distributed computing}, pages 267--273, 1997.

\bibitem{probability-book}
Michael Mitzenmacher and Eli Upfal.
\newblock {\em Probability and Computing: Randomized Algorithms and
  Probabilistic Analysis}.
\newblock Cambridge University Press, USA, 2005.

\bibitem{nakamoto2008bitcoin}
Satoshi Nakamoto.
\newblock Bitcoin: A peer-to-peer electronic cash system.
\newblock {\em URL: https://bitcoin.org/bitcoin.pdf}, 2008.

\bibitem{DBLP:conf/wdag/NaorK22}
Oded Naor and Idit Keidar.
\newblock On payment channels in asynchronous money transfer systems.
\newblock In Christian Scheideler, editor, {\em 36th International Symposium on
  Distributed Computing, {DISC} 2022, October 25-27, 2022, Augusta, Georgia,
  {USA}}, volume 246 of {\em LIPIcs}, pages 29:1--29:20. Schloss Dagstuhl -
  Leibniz-Zentrum f{\"{u}}r Informatik, 2022.
\newblock \href {https://doi.org/10.4230/LIPIcs.DISC.2022.29}
  {\path{doi:10.4230/LIPIcs.DISC.2022.29}}.

\bibitem{Pedrosa19}
Alejandro~Ranchal Pedrosa, Maria Potop-Butucaru, and Sara Tucci-Piergiovanni.
\newblock Scalable lightning factories for bitcoin.
\newblock In {\em Proceedings of the 34th ACM/SIGAPP Symposium on Applied
  Computing}, SAC '19, page 302–309, New York, NY, USA, 2019. Association for
  Computing Machinery.
\newblock \href {https://doi.org/10.1145/3297280.3297312}
  {\path{doi:10.1145/3297280.3297312}}.

\bibitem{poon2016}
Joseph Poon and Thaddeus Dryja.
\newblock The bitcoin lightning network: Scalable off-chain instant payments,
  2016.

\bibitem{rivest1983method}
Ronald~L. Rivest, Adi Shamir, and Leonard Adleman.
\newblock A method for obtaining digital signatures and public-key
  cryptosystems.
\newblock {\em Communications of the ACM}, 26(1):96--99, 1983.

\bibitem{abc-chain}
Jakub Sliwinski and Roger Wattenhofer.
\newblock Abc: Proof-of-stake without consensus, 2019.
\newblock URL: \url{https://arxiv.org/abs/1909.10926}, \href
  {https://doi.org/10.48550/ARXIV.1909.10926}
  {\path{doi:10.48550/ARXIV.1909.10926}}.

\bibitem{wood2014ethereum}
Gavin Wood.
\newblock Ethereum: A secure decentralised generalised transaction ledger.
\newblock {\em Ethereum project yellow paper}, 151(2014):1--32, 2014.

\end{thebibliography}
\newpage
\renewcommand{\thesection}{\Roman{section}} 
\renewcommand{\thesubsection}{\thesection.\Roman{subsection}}
\setcounter{section}{0}
\section*{Appendix}
This appendix contains the following:
\begin{enumerate}
%\item  {\bf Partial spending from partially validated funds} (Section~\ref{sec:off-chain}): This section includes a description of how partially validated funds can be spent without full validation. It turns out that this is possible if the buyer is correct. If the buyer is corrupt, it can prevent some such payments, but it will be detected.  

%\item  {\bf Formal Problem Definition} (Section~\ref{sec:definition}): This section gives a more formal presentation of the problem definition.

%\item  {\bf Impossibility Proofs} (Section~\ref{sec:impossible}): shows that no solution to the fractional spending problem that tolerates an adaptive adversary can be deterministic or allow the spending of the whole balance.

%\item  {\bf Random Quorum Selection} (Section~\ref{sec:quorum-select-proofs}): This section presents the protocol and the proofs for random quorum selection.

\item  {\bf Propagating Information} (Section~\ref{sec:propagate}): This section includes the protocols for propagating information with their correctness proof. 

\item  {\bf Fractional Spending and Settlement} (Section~\ref{ap:settle}): This section provides code for the buyer settlement protocol and the proofs that our solution for the payment and settlement satisfy the problem requirements.

\item  {\bf $(k_1-k_2)$-Quorum Systems Properties} (Section~\ref{sec:k1k2-proofs}): This section provides the proofs for the lemmas relating to $(k_1-k_2)$-quorum systems that are listed in the main text.

\end{enumerate}

\section{Propagating Information: Protocols and Properties}\label{sec:propagate}

As we discussed in the protocol overview, the adversary can 
corrupt a validator if it learns that the validator validated a 
particular transaction. The identities of validators that are involved 
in validation are kept secret by the validators themselves (if honest) or the seller until settlement time, at which time they need to be divulged.  The \textproc{Propagate()} protocol allows a  party (seller or validator) to send secret information to all validators so that the adversary would either
have to corrupt the party to learn the information or, if the adversary learns the information without corrupting the party, then all but $f$ honest validators ($n-2f$ validators) are also guaranteed to learn the information. The \textproc{Propagate()} protocol is implemented using secret sharing and reconstruction. The code is given in Algorithms~\ref{propagate-c} and~\ref{propagate-v}.

\begin{algorithm}[t]
    \caption{Propagating Information: Client $c$'s code}
    \label{propagate-c}
    \begin{algorithmic}[1] 
   \Procedure{PropagateClient}{$\mathit{message},N_{prop}$}
     \State $[s_i] = $ \Call{SecretShare}{$\mathit{message}$, $n$, $f$} 
     \For{$i:= 1 \to n$}             
      \State \send $(\mbox{SHARE}, N_{prop},  s_i,\sign_c(s_i||v_i||N_{prop})) \textbf{ to }  v_i$
        \EndFor

\vspace{1ex}
		\State $\mathit{acks}[N_{prop}] = \emptyset$
        \Repeat 
        \If{$\rcvd (\mbox{SHARE\_ACK}, N_{prop}) \textbf{ from } v$}
                \State $\mathit{acks}[N_{prop}] = \mathit{acks}[N_{prop}] \cup  \{v\}$
            \EndIf
        \Until{$|\mathit{acks}[N_{prop}]| \geq n-f$}
        \State \send $(\mbox{RECONSTRUCT}, N_{prop}) \textbf{ to } \validators$ 
        
        \vspace{1ex}
        \State $\mathit{reconstructed}[N_{prop}] = \emptyset$
        \Repeat 
        \If{$\rcvd (\mbox{RECONSTRUCTED}, \mathit{message}, N_{prop}) \textbf{ from } v$}
                \State $\mathit{reconstructed}[N_{prop}] = \mathit{reconstructed}[N_{prop}] \cup  \{v\}$
            \EndIf
        \Until{$|\mathit{reconstructed}[N_{prop}]| \geq n-f$}
        
       \EndProcedure
    \end{algorithmic}
\end{algorithm}

\begin{algorithm}[t]
    \caption{Propagating Information: validator $v$'s code. The client $c$ in the code can be either a seller invoking the client side of information propagation to settle a fund resulting from fractional payment transaction or a validator invoking propagation information to handle a buyer's settlement transaction}\label{propagate-v}
    \begin{algorithmic}[1] 
        \Procedure{PropagateServer}{$N_{prop}$}
        \State {\bf repeat}
        \State\hspace{3ex} \upon \rcpt $(\mbox{SHARE},N_{prop}, s,\sigma = \sign_\pks(s||v||N_{prop})) \textbf{ from } c$: 
			\State\hspace{6ex} \send $(\mbox{SHARE\_ACK}, N_{prop}) \textbf{ to } c$
			\State\hspace{6ex} $\mathit{share}[c,N_{prop}] = s,\sigma$
     
        \vspace{1ex}
        \State\hspace{3ex} \upon \rcpt $(\mbox{RECONSTRUCT}, N_{prop}) \textbf{ from } c$:
     		 \State\hspace{6ex} \send $(\mbox{FORWARD}, c, \mathit{share}[c,N_{prop}], N_{prop}) \textbf{ to } \validators$
		     \State\hspace{6ex} $\mathit{shares}[c,N_{prop}] = \emptyset$, $\mathit{Forwarded}[c,N_{prop}] = \emptyset$

	             \vspace{1ex}
	   			
			\vspace{1ex}
        \State\hspace{3ex} \upon \rcpt $(\mbox{FORWARD},c,s, \sigma,  N_{prop})$ \textbf{ from } $v$:
       
        	\State\hspace{6ex} \bif $\sigma = \sign_c(s||v||N_{prop}) \wedge v \notin \mathit{Forwarded}[c,N_{prop}]$ {\bf then}

                \State\hspace{9ex} $\mathit{shares}[c,N_{prop}] = \mathit{shares}[c,N_{prop}] \cup  \{ (s,\sigma)\}$                				
                \State\hspace{9ex} $\mathit{Forwarded}[c,N_{prop}] = \mathit{Forwarded}[c,N_{prop}] \cup  \{ v \}$
        \State \hspace{9ex} \bif {$|\mathit{shares}[c,N_{prop}]| \geq f+1$} {\bf then}
			 \State\hspace{12ex} $\mathit{message}[c,N_{prop}] =$ 
			 				\Call{ReconstructSecret}{$\mathit{shares}[c,N_{prop}]$}
         	\State\hspace{12ex} $\send (\mbox{RECONSTRUCTED}, \mathit{message}[c,N_{prop}], 
								N_{prop}) \textbf{ to }  c \mbox{ and } \validators$ 

       \State {\bf until} $(n-f)$ RECONSTRUCT messages received
       \EndProcedure
    \end{algorithmic}
\end{algorithm}

Propagating information is relatively simple in our setting and is easier than the asynchronous verifiable secret sharing problem~\cite{cachin2002asynchronous} (a solution to which can also be used, but would be an overkill). The client protocol takes as input two arguments: the {\em message} being propagated and a unique nonce $N_{prop}$ to distinguish between different calls to the propagate protocol. The client $c$ who propagates the information creates shares of the information
that needs to be propagated so that any $f+1$ out of $n$ shares can be used to reconstruct the {\em message}. Depending on the settlement protocol, $c$ is either a seller or a validator.  Each share $s_i$ is addressed to a particular validator $v_i$. In addition to the share $s_i$, $v_i$ receives from the client a signature $\sigma_i = \sign_c(s_i||v_i||N_{prop})$ that $v_i$ can later use to prove that $s_i$ is a properly received share. After the shares are distributed and $n-f > 2f+1$ validators acknowledge receiving the shares,
the client sends a RECONSTRUCT message so that the validators can reconstruct the {\em message} from the shares. Validators broadcast their shares and collect shares that they authenticate until $f+1$ authenticated shares are collected. At that point, the {\em message} is reconstructed using the collected shares. It is important to note here that all the shares used in the reconstruction are validated  to be form the client (line 9 of validator code), so if the client is honest, the $f+1$ shares will all be valid even if they are received from corrupt validators. If the client is not honest, we don't care about the value that is reconstructed.
A validator stops participating in the propagation protocol for a particular $N_{prop}$ when $n-f$ different validators announce that they have reconstructed the messages. Of those validators, $n-2f$ must be honest.

The algorithms guarantees that if the client calling \textproc{Propagate($\mathit{message}$)} is honest, $n-2f$ honest validators will receive $\mathit{message}$. If the client propagating the shares is not honest, there are no guarantees as to what is reconstructed and the reconstruction might even fail, but that does not affect the algorithms that use the \textproc{Propagate()} function.

\begin{lemma}
If an honest client calls \textproc{Propagate($m$)}, $n-2f$ honest validators will be guaranteed to receive $m$. 
\end{lemma}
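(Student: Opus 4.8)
The plan is to trace through the three phases of the \textproc{Propagate()} protocol—share distribution, reconstruction trigger, and share forwarding—and argue that each phase makes progress and that the final phase delivers the message to enough honest validators. Throughout, I would rely on the standard asynchronous liveness assumption from the model: a correct party that sends a request to all $n$ validators and waits for $n-f$ replies is guaranteed to eventually receive them, since at most $f$ validators are faulty.

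First I would handle the share-distribution phase. The honest client $c$ runs \textproc{SecretShare} to get $n$ shares with reconstruction threshold $f+1$, and sends share $s_i$ together with the signature $\sign_c(s_i\|v_i\|N_{prop})$ to each $v_i$. Every honest validator that receives a well-formed SHARE message replies with SHARE\_ACK and stores $(s_i,\sigma_i)$ in $\mathit{share}[c,N_{prop}]$; since there are at least $n-f$ honest validators, $c$ eventually collects $n-f$ acknowledgments and proceeds to broadcast RECONSTRUCT. Next I would handle the forwarding phase: every honest validator that receives RECONSTRUCT broadcasts its stored $(\mathit{share}[c,N_{prop}])$ in a FORWARD message. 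An honest validator $v$ accepts a forwarded share only if the accompanying signature verifies as $\sign_c(s\|v'\|N_{prop})$; because $c$ is honest and the adversary cannot forge $c$'s signature, every accepted share is a genuine share produced by $c$'s call to \textproc{SecretShare}. Since at least $n-f \geq 2f+1 \geq f+1$ honest validators received a share in the first phase and will forward it, every honest validator eventually collects at least $f+1$ validated shares, calls \textproc{ReconstructSecret}, and—because all $f+1$ shares are authentic shares of $m$—recovers exactly $m$ into $\mathit{message}[c,N_{prop}]$. That gives $n-2f$ honest validators (in fact all $n-f$ honest ones) receiving $m$.

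The one point requiring care is termination of the \texttt{repeat} loop in the validator code: a validator stops participating once it has seen $n-f$ RECONSTRUCTED messages. I would argue this does not prevent the guarantee: an honest validator only emits RECONSTRUCTED after it has itself reconstructed $m$ (so it has already achieved the desired property before it can possibly stop), and since every honest validator reconstructs and broadcasts RECONSTRUCTED, the $n-f$ threshold is met, so no honest validator is left hanging. Thus the loop terminating is consistent with—indeed, a consequence of—every honest validator having already received $m$.

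The main obstacle, and the subtle part of the argument, is ensuring the adversary cannot cause an honest validator to reconstruct a \emph{wrong} value: a corrupt validator could forward a garbage ``share.'' This is exactly why the protocol attaches $c$'s signature to each share and why honest validators re-verify it before admitting a share to $\mathit{shares}[c,N_{prop}]$. Because $c$ is honest and the adversary is computationally bounded (cannot forge signatures, per the model), the only share-values that pass verification for slot $v'$ are the ones $c$ actually sent, so the reconstructing set is always a subset of $c$'s genuine shares of $m$, and threshold-$(f+1)$ secret sharing then guarantees the reconstructed value is $m$. I would state this dependence on signature unforgeability explicitly, since it is what upgrades ``$n-2f$ honest validators receive \emph{something}'' to ``$n-2f$ honest validators receive $m$.''
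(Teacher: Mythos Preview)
Your approach matches the paper's and the signature-authenticity point is exactly what makes reconstruction yield $m$ rather than garbage. One step needs tightening, however: your termination paragraph is circular. You write ``since every honest validator reconstructs and broadcasts RECONSTRUCTED, the $n-f$ threshold is met,'' but whether every honest validator reconstructs is precisely what is at issue---an honest validator could receive $n-f$ RECONSTRUCTED messages (up to $f$ of them from corrupt parties, who may send RECONSTRUCTED whenever they like) and exit the loop \emph{before} the adversary delivers the FORWARDs it needs. So the parenthetical ``in fact all $n-f$ honest ones'' is an over-claim. The paper's non-circular argument, whose ingredients you already have, runs in the other direction: when any honest validator terminates it has seen $n-f$ RECONSTRUCTED messages, at least $n-2f$ of which come from honest validators, and (as you note) an honest validator only emits RECONSTRUCTED after it has reconstructed $m$; hence at least $n-2f$ honest validators have reconstructed, which is exactly the lemma.

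A smaller related slip: from the $n-f$ SHARE\_ACKs the client collects, only $n-2f$ are guaranteed to be from honest validators, so the number of honest share-holders guaranteed before RECONSTRUCT goes out is $n-2f$, not $n-f$. This is still at least $f+1$ under the paper's assumptions, so your forwarding step survives unchanged.
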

\begin{proof}(sketch)
If the client is honest, when a validator receives a RECONSTRUCT message, $n-f$ validators must have already received the SHARE message. Of these validators $n-2f \geq 2f+1$ validators are honest and will each send FORWARD message containing its share to every other honest validator who will be able to reconstruct the message.
Validators stop their execution when they receive $n-f$ RECONSTRUCT messages from other validators. Of these $n-2f$ are from honest validators.
\end{proof}

\begin{lemma}
Let $N$ be a random value that is generated by a client $c$ such that no value that is directly or indirectly dependent on $N$ is sent to any validator with the exception of  $\textproc{H}(N)$. If the adversary learns $N$ \whp, then \whp the adversary must have corrupted the client $c$.
\end{lemma}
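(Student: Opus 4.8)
The plan is to reduce the claim to the hiding property of the random oracle. The only thing the adversary ever sees that depends on $N$ is the hash value $\textproc{H}(N)$ (this is exactly the situation in the protocols: the seller generates $N_s$, sends only $h_s = \textproc{H}(N_s)$ to validators, and keeps $N_s$ local; similarly for $N_{prop}$ and $N_{settle}$). In the random oracle model, the map $N \mapsto \textproc{H}(N)$ is, from the adversary's perspective, a uniformly random function, so a value $h = \textproc{H}(N)$ carries no information about $N$ other than through queries the adversary itself makes to the oracle. First I would make the hypothesis precise: the only data placed on the wire, in validator state, or otherwise accessible to the adversary that is a (deterministic or randomized) function of $N$ is $\textproc{H}(N)$; in particular $N$ itself and any intermediate value used to derive keys or seeds from $N$ stays in the local memory of the honest client $c$, which by the system model is inaccessible to the adversary unless $c$ is corrupted.

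Second, I would argue the contrapositive by a standard game hop. Suppose the adversary $\mathcal{A}$ outputs $N$ with non-negligible probability without ever corrupting $c$. Since $\mathcal{A}$ is computationally bounded, it makes at most polynomially many oracle queries. Consider the event that $\mathcal{A}$ queries $N$ to the random oracle at some point. Conditioned on \emph{not} having queried $N$, the value $\textproc{H}(N)$ is an independent uniform string, so the rest of $\mathcal{A}$'s view is independent of $N$; as $N$ was drawn uniformly from $\{0,1\}^r$ with $r$ a security parameter, the probability that $\mathcal{A}$'s output equals $N$ in this case is at most $2^{-r}$, which is negligible. Hence, up to a negligible term, whenever $\mathcal{A}$ outputs $N$ it must have queried $N$ to the oracle. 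But the oracle is queried on $N$ only if some party holding $N$ (or holding enough information to derive $N$) queries it; by hypothesis the only such party is the honest client $c$, whose memory and queries are not observable by the adversary unless $c$ is corrupted. So for $\mathcal{A}$ to learn $N$ through an oracle query it must first have corrupted $c$ and thereby read $N$ from $c$'s memory --- contradicting the assumption that $c$ was never corrupted.

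Third, I would note the two minor technical points that need care: (i) $c$ might legitimately query $\textproc{H}(N)$ itself when computing $h_s = \textproc{H}(N_s)$, but this query happens inside $c$ and is not visible to the adversary in the random oracle model (the adversary sees only its own queries and the public transcript), so it does not help $\mathcal{A}$; (ii) one must ensure no \emph{other} honest party is given $N$ --- but in all our protocols $N$-type nonces are generated and used by a single client, so this holds by construction, and the lemma's hypothesis ("generated by a client $c$") enforces it in general. The reconstruction/propagation machinery never transmits $N_{settle}$ or $N_s$ themselves; it transmits shares of unrelated messages tagged by these nonces, and tagging by $N_{prop}$ is fine because $N_{prop}$ is itself revealed to validators by design --- the lemma is applied only to the nonces that are \emph{not} revealed, such as $N_s$.

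The main obstacle is purely definitional rather than mathematical: pinning down exactly what "directly or indirectly dependent on $N$" excludes, so that the reduction to the random-oracle hiding property is airtight --- in particular making sure that no key, seed, commitment, or index derived from $N$ and placed in the transcript leaks a single bit of $N$ beyond what $\textproc{H}(N)$ already (harmlessly) leaks. Once that is set up, the probabilistic argument itself is the routine "the adversary must have queried the point" random-oracle lemma and goes through with a union bound over $\mathcal{A}$'s polynomially many queries.
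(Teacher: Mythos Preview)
Your approach is sound and actually more careful than the paper's own sketch, but one sentence in your second paragraph is logically off and should be fixed. You correctly argue that if $\mathcal{A}$ never queries $N$ to the oracle then its view is independent of $N$ and it outputs $N$ with probability at most $2^{-r}$. But the next step --- ``the oracle is queried on $N$ only if some party holding $N$ (or holding enough information to derive $N$) queries it'' --- is not true as stated: the adversary may query any string it likes, including $N$ by sheer luck. The right completion is exactly the union bound you allude to at the end: before $\mathcal{A}$'s first query equal to $N$, its view is independent of $N$, so each of its polynomially many queries hits $N$ with probability $2^{-r}$, whence $\Pr[\mathcal{A}\text{ ever queries }N]$ is negligible. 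That closes the argument without the circular ``only parties holding $N$ can query $N$'' step. (A minor side remark: $N_{settle}$ is passed as the $N_{prop}$ argument to \textproc{Propagate} and is therefore sent in the clear in SHARE messages, so it is not an instance of this lemma; $N_s$ is the intended application.)

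For comparison, the paper takes a different route: it argues by induction on the number of messages $c$ sends, observing that each message either carries nothing dependent on $N$ or carries only $\textproc{H}(N)$, and in neither case does the computationally bounded adversary learn $N$ except with negligible probability. Your game-hop reduction is the more standard random-oracle presentation and makes the role of the oracle queries explicit, whereas the paper's induction is terser and leaves the ``cannot learn $N$ from $\textproc{H}(N)$'' step implicit. Both reach the same conclusion; yours is the cleaner formalization once the sloppy sentence above is replaced by the union-bound step.
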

\begin{proof}(sketch)
The proof is by induction on the number of messages sent by $c$. If $c$ sends no messages, it should be clear that the adversary cannot learn $N$ without corrupting $c$. Assume that the lemma holds for the first $i$ messages sent by $c$ and consider the $i+1$'th message sent by $c$. The $i+1$'th message either contains no value dependent on $N$ or contains $\textproc{H}(N)$ and other values that are not dependent on $N$. In the first case, the adversary clearly doesn't learn anything about $N$. In the second case, given the adversary's bounded computational power, it cannot learn the value of $N$ except with negligible probability. It follows that the chain must be a 0-length chain and the adversary learns $N$ by corrupting $c$.
\end{proof}

\begin{lemma}
Let $N$ be a random value that is generated by a client $c$ such that no value that is directly or indirectly dependent on $N$ is sent to any validator with the exception of  $\textproc{H}(N)$ or by executing $\textproc{Propagate}(m)$ for a message $m$ that depends on $N$. If the adversary learns $N$ \whp, then \whp the adversary must have corrupted the client $c$ or $n-2f$ honest validators are guaranteed to learn $m$.
\end{lemma}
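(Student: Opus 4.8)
The plan is to reduce this lemma to the two preceding lemmas by a case analysis on how the client $c$ interacts with values derived from $N$. First I would observe that, by hypothesis, every message $c$ sends either (i) carries no value that is directly or indirectly dependent on $N$, (ii) carries the value $\textproc{H}(N)$, or (iii) is a message sent as part of an invocation $\textproc{Propagate}(m)$ for some $m$ that depends on $N$. Messages of type (i) and (ii) are exactly those already handled in the previous lemma: given the random oracle model and the adversary's bounded computation, such messages leak nothing about $N$ except with negligible probability. So the only new ingredient is the handling of a $\textproc{Propagate}(m)$ call with $m$ dependent on $N$.

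The key step is then to analyze a single $\textproc{Propagate}(m)$ invocation. Inside \textproc{PropagateClient}, the only values of $c$ that depend on $m$ (and hence possibly on $N$) are the secret shares $s_1,\dots,s_n$ and the signatures $\sign_c(s_i||v_i||N_{prop})$; the nonce $N_{prop}$ is independent of $N$ by the lemma's framing. By the $f$-out-of-$n$ threshold secret-sharing property, any set of at most $f$ shares is information-theoretically independent of $m$, so the adversary gains no information about $m$ (and thus about $N$) merely from the shares held by the up-to-$f$ validators it has corrupted \emph{before} the reconstruction phase. Consequently, for the adversary to learn $m$ through this \textproc{Propagate} call, either it must corrupt $c$ directly (in which case we are done, matching the conclusion ``the adversary must have corrupted $c$''), or it must obtain $f+1$ shares, which — following the argument of the \textproc{Propagate} correctness lemma — can only happen after the RECONSTRUCT phase begins, at which point $n-f$ validators have acknowledged receipt and $n-2f$ \emph{honest} validators are guaranteed to reconstruct and hence learn $m$. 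That yields the second disjunct of the conclusion.

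Finally I would stitch these pieces together by an induction on the sequence of messages sent by $c$, exactly parallel to the proof of the previous lemma, but with the inductive step now allowing the third message type. At each step, if the message is of type (i) or (ii), the adversary's knowledge of $N$ is unchanged (up to negligible probability); if it is a \textproc{Propagate} message with $m$ dependent on $N$, we invoke the single-invocation analysis above: either the adversary has corrupted $c$, or $n-2f$ honest validators are guaranteed to learn $m$. Since learning $N$ w.h.p.\ requires the adversary to have extracted information about $N$ from \emph{some} message, one of these two outcomes must hold, establishing the lemma.

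I expect the main obstacle to be making precise the claim that the adversary cannot accumulate $f+1$ shares of $m$ before the reconstruction phase \emph{and} learn $N$ from partial information across several distinct \textproc{Propagate} calls — i.e., ruling out subtle cross-invocation leakage where, say, shares of $m_1$ in one call combined with shares of $m_2$ in another (both depending on $N$) jointly reveal $N$. This is handled by noting that each \textproc{Propagate} uses a fresh independent sharing randomness and that threshold secret sharing leaks nothing below the threshold, so the joint view of $\le f$ shares per invocation remains independent of the underlying secrets; but articulating this cleanly, rather than hand-waving it, is the delicate part. The random-oracle modeling of $\textproc{H}(N)$ and the computational boundedness of the adversary are used, as in the previous lemma, to dismiss the type-(ii) messages.
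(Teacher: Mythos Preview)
Your proposal is correct and follows essentially the same route as the paper's proof sketch: both argue that individual shares leak nothing about $N$ below the threshold, so the adversary must either corrupt $c$ or acquire an $(f{+}1)$th share via an honest validator's FORWARD message, which in turn forces the RECONSTRUCT phase to have begun and hence (by the honest-client \textproc{Propagate} guarantee) $n-2f$ honest validators to eventually reconstruct $m$. Your version is simply more explicitly structured---the induction over messages and the three-way case split are spelled out, and you flag the cross-invocation independence issue that the paper's sketch leaves implicit---but the underlying argument is the same.
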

\begin{proof}(sketch)
Since each share generated by the secret sharing of \textproc{Propagate}($m$) is independent of $N$, by an argument similar to that in the previous lemma, the adversary cannot learn $N$ \whp if it does not learn $f+1$ shares. If the adversary learns $N$, then it must have corrupted a validator  that received $f+1$ valid shares. One of these shares must be from an honest validator that received a RECONSTRUCT message. It follows that $n-f$ validators will receive RECONSTRUCT messages because the client stays active until reconstruction is confirmed by $n-f$ validators.  Honest validators that receive RECONSTRUCT messages send FORWARD messages to all other validators who will be able to reconstruct $m$. Validators participate in the protocol until $n-f$ validators announce that they reconstructed the message at which point $n-2f$ honest validators must have reconstructed the message.
\end{proof}

It might seem a little strange that even though every honest validator sends messages to every other honest validator, we only guarantee that $n-2f$ honest validators will receive the message if the adversary doesn't corrupt the client. The reason is that even though honest validators might be guaranteed to eventually receive messages sent by other honest validators, due to system asynchrony, we need to make a statement about what holds when the client finishes executing \textproc{Propagate}($m$). 
 %done

\section{Fractional Spending and Settlement Proofs}\label{ap:settle}

The protocol for buyer fund settlement is shown in Algorithms~\ref{buyer-settlement}
and~\ref{v-buyer-settlement}. Remember that we assume that only one settlement transaction will be invoked for a given fund. In order for the execution to complete, at least one of the owner should be engaged in its execution. The rest of this section presents the correctness proof of the solution for fractional spending and settlement protocols.
.

\begin{algorithm}[t]
    \caption{Buyer Fund Settlement}
    \label{buyer-settlement}
  \begin{algorithmic}[1] 
    \Procedure{BuyerSettle}{$F$}
        \State \send $(\mbox{SETTLE},F) \textbf{ to }  \validators$
         \Repeat 
            \If{$\rcvd \mathit{resp} \textbf{ from } q_i$} \color{blue}\Comment{Wait for replies from }\color{black}
                \State $\mathit{replies} = \mathit{replies} \cup \mathit{resp} $ \color{blue}\Comment{validators until enough }\color{black}
            \EndIf
        \Until{$\exists F'\,:\,|\{\mathit{resp}\,:\, \mathit{resp}\in \mathit{replies} \wedge \mathit{resp} = F'\}| = n-2f$} \color{blue}\Comment{identical replies are received}\color{black}

        \State \textbf{return } $F'$
    \EndProcedure
  \end{algorithmic}
\end{algorithm}

\begin{algorithm}[!ht]
    \caption{Code of Validator $v$ for Buyer Fund Settlement}
    \label{v-buyer-settlement}
  \begin{algorithmic}[1] 
    \Procedure{ValidatorBuyerSettle}{$F$, $\pkb$}
    \If{$\rcvd (\mbox{SETTLE},F) \textbf{ from } 
    		\pkb \wedge \pkb \in F.\owners$} 
    	\State $N_{prop} \leftarrow \{0,1\}^n$	
				\color{blue}
				\Comment{different validators choose different $N_{prop}$ values} 
				\color{black}
        \State $value[F,N_{prop}] = \bot$ 
        		\color{blue}
				\Comment{$F$ will have multiple entries, one for each such value} 
				\color{black}

        \vspace{1ex}
        \color{blue}\Comment{If $v$ is witness to payment from $F$, propagate payment info} \color{black}
        \If{$\exists \,(\tid, h_s,\sigma, N) \in \mathit{validated\_transactions}$ $\wedge$
        
        \hspace{6ex}	$\tid.F = F$ $\wedge$ $pkb \in \tid.\owners$} 
          \State  \Call{Propagate}{$\langle \langle F,\mbox{SETTLE}\rangle \rangle$, $v$ , 
          			$(\tid,h_s,\sigma, N), N_{prop}$}

        \Else \color{blue}\Comment{otherwise propagate that $v$ is not witness to any payment from $F$} \color{black}
          \State \Call{Propagate}{$\langle \langle F,\mbox{SETTLE}\rangle \rangle$, $v$ , 
          	$\sign_v(F||none),N_{prop}$} 
	        \EndIf
    \EndIf

    \vspace{1ex}
    	\color{blue}\Comment{Update } known $\mathit{transactions}[F]$ based on propagated information\color{black}
    \If{$\mathit{message}[F,N_{prop}] = \langle \langle F,\mbox{SETTLE}\rangle \rangle$, $v$ , 
          			$(\tid,h_s,\sigma, N), N_{prop}\rangle$}
\State $\mathit{transactions}[F] = \mathit{transactions}[F] \cup \{(\tid,h_s)\}$
     \EndIf
     
	\vspace{1ex}
	\color{blue}\Comment{Forward any witness or non-witness information received} \color{black}
    \If{$\mathit{message}[F,N_{prop}] = (\langle F,\mbox{SETTLE}\rangle, v', *)$}
       \State $\send (F, \mbox{SETTLE}, value[F,N_{prop}] )$ to all validators
     \EndIf

	\vspace{1ex}
    \If{$\rcvd (\langle F,\mbox{SETTLE}\rangle, v', *)$ 	
		$\wedge$ $v' \not\in \mathit{settle}\_\mathit{validators}[F]$}
         \State $\mathit{settle}\_\mathit{validators}[F] = 
         		\mathit{settle}\_\mathit{validators}[F] \cup v'$
    \EndIf
     
    \vspace{1ex}   
    \If{$\rcvd (F, \mbox{SETTLE}, v' , (\tid,\sigma,N))$ 
    		$\wedge$ $v' \not\in \mathit{settle}\_\mathit{validators}[F]$ $\wedge$
		
		\hspace{1ex} 
			$\Call{ValidPayment}{F, (\tid,h_s,\sigma,N})$}                   
			\State $\mathit{payments}[F] = \mathit{payments}[F] \cup 
											(\tid, h_s,\sigma, N)$
    \EndIf

       	\vspace{1ex}
    \If{$|\mathit{settle}\_\mathit{validators}[F]| = n-f$}
                \ForAll{$(\tid, h_s,\sigma, N) \in \mathit{payments}[F]\,:\, 	
                		\Call{ValidPayment}{\tid,h_s,\sigma,N}$}
                \State $\mathit{transactions}[F] = \mathit{transactions}[F] \cup \{(\tid,h_s)\}$
               
                \State $\mathit{settle} = \mathit{settle} \cup \{F\}$ 
                \If{$|\mathit{transactions}[F]| > s_1$}        		\color{blue}
				\Comment{if many transactions were issued from $F$} 
				\color{black}
                	\State {\bf abort} \color{blue}
				\Comment{abort} 
				\color{black}
				\EndIf
        		\color{blue}
		
		\vspace{1ex}
				\Comment{Add $F$ to funds with SETTLE transactions} 
				\color{black}
				 \EndFor
          		\State $\mathit{new}\_\mathit{balance} = F.\fbl - |\mathit{transactions}[F]|\times 1/s_2$
         		\State $F'.\tid = H(F.\tid)\,;\, F'.\fbl = \mathit{new}\_\mathit{balance}\,;\, F'.\owners = F.\owners$
        \State $\send (F',\sign_v(F')) \textbf{ to } F'.\owners$
     \EndIf
    \EndProcedure
  \end{algorithmic}
\end{algorithm}

\begin{lemma}[Transactions to honest sellers are accounted for]
Let $T_{s} = (\tx, N_s)$ be a fractional spending transactions for which the seller is honest. If the buyer executes $(\mbox{SETTLE},F)$, then $n-2f$ honest validators will have $(\tx, h_s = \textproc{H}(N_s)) \in \mathit{transactions}[F]$. 
\end{lemma}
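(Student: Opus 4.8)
The plan is to chain three facts: (i) since the seller is honest, $T_s=(\tx,N_s)$ was validated through a quorum whose correct members are hidden from the adversary, so several validators that \emph{remain honest} recorded the payment; (ii) during the buyer's settlement of $F$, at least one such honest validator runs \textproc{Propagate()} on the record of $T_s$; and (iii) the delivery guarantee of \textproc{Propagate()} then forces $n-2f$ honest validators to insert $(\tx,h_s)$ into $\mathit{transactions}[F]$.

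For (i): since $T_s$ was validated, the seller passed the test $|\mathit{witnesses}|\ge(1-\alpha)m$ of Algorithm~\ref{seller-partial-payment}, so at least $(1-\alpha)m$ members of the quorum $Q_\tx=\textproc{SelectQuorum}(\tx,N_s)$ returned $\langle\valid,\tx,h_s,\cdot\rangle$ and hence, by Algorithm~\ref{validator-payment}, stored $\langle\tx,h_s,\sigma,N\rangle$ in $\mathit{validated\_transactions}$ with $\tx.F=F$ and $\pkb\in F.\owners$. By Lemmas~\ref{lem:quorum-random} and~\ref{lem:unknown-validators} the quorum $Q_\tx$ is uniform and its correct members are unknown to the adversary, so by Lemma~\ref{lem:corrupt-validator-ratio} at every moment at most $(1+\mu)p_f m$ members of $Q_\tx$ are corrupt, \whp. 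Thus, at the moment the buyer settles, at least $(1-\alpha)m-(1+\mu)p_f m$ of the validators holding the record of $T_s$ are honest; with $\alpha=1/3$ and $n>8f$ this is positive, so some validator $v^{\ast}$ is honest throughout and holds $\langle\tx,h_s,\sigma,N\rangle$.

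For (ii) and (iii): if the buyer follows the protocol it sends $(\text{SETTLE},F)$ to all validators, so $v^{\ast}$ receives it, takes the witness branch of Algorithm~\ref{v-buyer-settlement} (its stored entry matches $F$ and $\pkb$), and calls \textproc{Propagate()} on $(\langle\langle F,\text{SETTLE}\rangle\rangle, v^{\ast}, (\tx,h_s,\sigma,N), N_{prop})$. For a corrupt buyer I would exploit termination of the settlement: the buyer collects $n-2f$ \emph{identical} replies $F'$, and an honest validator replies only after $|\mathit{settle\_validators}[F]|=n-f$, so at least $n-2f$ honest validators participated, each having received $(\text{SETTLE},F)$ and run its own \textproc{Propagate()}; I would then argue that the honest participants necessarily include a witness of $T_s$, since otherwise no honest validator ever learns of $T_s$, and yet the progress guarantee for the honest seller's own settlement of the fund produced by $T_s$ (which requires $(\tx,h_s)\in\mathit{transactions}[F]$ once $F\in\mathit{settle}$, per Algorithm~\ref{validator-seller-settlement}) could not hold. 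In either case an honest caller runs \textproc{Propagate()} on the record of $T_s$, so by the delivery guarantee of \textproc{Propagate()} (Section~\ref{sec:propagate}) $n-2f$ honest validators receive that payload and, by the matching branch of Algorithm~\ref{v-buyer-settlement} (directly, or through the forwarding step and the $\mathit{payments}[F]$ loop), add $(\tx,h_s)$ to $\mathit{transactions}[F]$.

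I expect the main obstacle to be the corrupt-buyer case of step (ii): formally ruling out a run in which the corrupt buyer drives the settlement to completion while never delivering $(\text{SETTLE},F)$ to any honest witness of $T_s$. Making the liveness/consistency interplay precise --- connecting the ``$n-f$ propagations'' guard with the requirement that the $n-2f$ collected signatures be on the same balance, and handling the abort-on-$|\mathit{transactions}[F]|>s_1$ check --- is the delicate part; the remaining steps are quorum-size bookkeeping and a black-box use of the \textproc{Propagate()} guarantee.
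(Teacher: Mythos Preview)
Your plan shares the right ingredients with the paper (hidden quorum, \textproc{Propagate()} guarantee), but the decomposition differs in a way that creates a real gap.

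The paper does \emph{not} split on whether the buyer is honest; it splits on whether the honest seller has already executed its own settlement of the fund produced by $T_s$. If the seller has settled, then Algorithm~\ref{validator-seller-settlement} already forced $n-f$ validators (hence $n-2f$ honest ones) to insert $(\tx,h_s)$ into $\mathit{transactions}[F]$, and nothing more is needed. If the seller has not settled, then $N_s$ has not been divulged, so the identities of the witnesses of $T_s$ are still hidden from the adversary. The crucial step you are missing is that this hiddenness is exactly what makes the corrupt-buyer case go through: when the buyer's settlement proceeds and the adversary schedules which $n-f$ validators' messages get through first, it cannot distinguish witnesses of $T_s$ from non-witnesses, so (by the system-model argument in Section~\ref{sec:model}) \whp some honest witness is among those $n-f$ and successfully propagates. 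This is how the paper discharges the case you flag as ``the main obstacle.''

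Your own handling of that case is circular: you invoke the progress guarantee for the seller's settlement of the fund produced by $T_s$, but that guarantee is the \emph{next} lemma in the paper and is proved \emph{using} the present lemma. You also do not cover the scenario in which the seller has already settled and therefore has revealed the witnesses' identities; in that scenario your ``identities remain hidden'' argument in step~(i) no longer applies, and you need the paper's first branch instead. Reorganizing around the seller-settled/not-settled split, and using hiddenness to bound the adversary's scheduling power in the buyer's settlement, removes both problems.
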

\begin{proof}(sketch)
If the seller already executed a settlement transaction for the fund resulting from $T_s$, then it should have received validations from $n-f$ validators each of which adds $(\tx, h_s = \textproc{H}(N_s))$ to $\mathit{transactions}[F]$. So, we assume that the seller did not execute a settlement transaction for the fund resulting from $T_s$ and did not divulge $N_s$. It follows that the identities of the validators of $T_s$ are not known to the adversary who will not be able to selectively delay messages from all the validators of $T_s$ (see discussion in Section~\ref{sec:model} regarding adversary's control of asynchronous communication), which means that some of the honest validators of $T_s$ will succeed in propagating their information to $n-2f$ honest  validators that add 
$(\tx, h_s)$ to $\mathit{transactions}[F]$. 
\end{proof}

\begin{lemma} All partially certified funds with honest owners can be settled successfully. 
\end{lemma}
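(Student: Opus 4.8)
The plan is to trace the execution of \textproc{SellerSettle} (Algorithm~\ref{seller-settlement}) invoked by the honest owner $\pks$ on a partially certified fund $F'$ that resulted from a \emph{validated} fractional spending transaction $(\tid, N_s)$ with $\tid = \langle F, \pkb, \pks\rangle$, and to show that $\pks$ accumulates $n-f$ matching signed settlement replies, so the call terminates and returns a fully validated fund $F''$ with $F''.\fbl = F.\fbl/k'_2 = F'.\fbl$. Because $\pks$ is honest, it invokes $\textproc{Propagate}(\langle \tid, N_s, \mathit{payment\_witnesses}, \text{SETTLE}\rangle, N_{settle})$ as an honest client, so by the Propagate correctness lemma at least $n-2f$ honest validators reconstruct the request; since those validators then broadcast the reconstructed message to all validators and correct-to-correct messages are eventually delivered (Section~\ref{sec:model}), every honest validator -- of which there are at least $n-f$ -- eventually has $\mathit{message}[\pks, N_{settle}]$ set to the request. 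Moreover, since $\pks$ is honest it never revealed $N_s$ or the identities of the correct witnesses in $\mathit{payment\_witnesses}$ before settlement (Lemmas~\ref{lem:quorum-random} and~\ref{lem:unknown-validators}), so the adversary could not have pre-emptively corrupted the quorum to poison the validators' $\mathit{settle}$ bookkeeping for $F$ against this transaction.

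Next I would verify that each such honest validator passes every conjunct of the guard in Algorithm~\ref{validator-seller-settlement}. The recomputation $Q = \textproc{SelectQuorum}(\tid, N_s)$ returns exactly the quorum $\pks$ used, since the quorum is a deterministic function of $(\tid, N_s)$ (Section~\ref{sec:quorum-select-proofs}). Because $(\tid, N_s)$ is a genuine validated fractional spending transaction, $\mathit{payment\_witnesses}$ is the certificate $\pks$ obtained in Algorithm~\ref{seller-partial-payment}: it consists of signed $\valid$ replies from members of that quorum, so $\textproc{ValidatedTransaction}$ succeeds and $Q' = \{q : (\mathit{resp}, q) \in \mathit{payment\_witnesses}\} \subseteq Q$; the size bound on $Q'$ holds because an honest seller obtains a certificate of the required size \whp, which follows from the \whp bound on the number of previously corrupt servers in a randomly chosen quorum (Lemmas~\ref{lem:corrupt-validator-ratio} and~\ref{lem:uniform-prob-asynch}). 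The only remaining conjunct is $(\tid, \textproc{H}(N_s)) \in \mathit{transactions}[F] \;\vee\; F \notin \mathit{settle}$, and here I would split into two cases. If $F.\owners$ never completes a settlement of $F$, then no honest validator ever adds $F$ to $\mathit{settle}$, so the right disjunct holds at all honest validators. If $F$ is (or becomes) settled, then by the preceding lemma (transactions to honest sellers are accounted for), together with the fact that every honest validator eventually processes the buyer's settlement and thereby updates $\mathit{transactions}[F]$, every honest validator eventually has $(\tid, \textproc{H}(N_s)) \in \mathit{transactions}[F]$, so the left disjunct holds. In either case, all $\geq n-f$ honest validators eventually send $\pks$ a valid signed reply for $F''$, so $|\mathit{settle\_witnesses}|$ reaches $n-f$ and \textproc{SellerSettle} terminates with a fully validated $F''$.

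The main obstacle I expect is the case in which the buyer settles $F$ concurrently with the seller settling $F'$: the guard's disjunct must hold at enough honest validators no matter how far the buyer's settlement has progressed, which is precisely where the "transactions to honest sellers are accounted for" lemma is needed, combined with an asynchrony argument that the seller does not stall -- it stays active, and because correct-to-correct messages are eventually delivered, the $(n-f)$-th valid reply must arrive. A secondary subtlety is confirming that the certificate $\mathit{payment\_witnesses}$ held by an honest seller is always large enough to satisfy the size check in Algorithm~\ref{validator-seller-settlement}, which again rests on the \whp bounds on corrupt-server counts in randomly chosen quorums; since each probabilistic step contributes only a $\negl$ failure term, the statement holds \whp.
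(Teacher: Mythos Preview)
Your proposal is correct and follows essentially the same approach as the paper's proof sketch: verify that an honest seller's propagated settlement request reaches the honest validators, check that each conjunct of the guard in Algorithm~\ref{validator-seller-settlement} is satisfied (deterministic quorum recomputation, valid witness set of adequate size), and handle the critical disjunct $(\tid,\textproc{H}(N_s))\in\mathit{transactions}[F]\ \vee\ F\notin\mathit{settle}$ by the same two-case split, invoking the ``transactions to honest sellers are accounted for'' lemma when $F$ has been settled. One minor remark: your mechanism for why \emph{every} honest validator ends up with $\mathit{message}[\pks,N_{settle}]$ set is stated as rebroadcast of \textsc{Reconstructed}, but in the code $\mathit{message}[\cdot]$ is set only via share reconstruction from $f{+}1$ forwarded shares, not by receipt of \textsc{Reconstructed}; the conclusion still holds by the argument in the \textproc{Propagate} lemma's proof, and the paper's own sketch glosses over this detail as well.
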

\begin{proof}(sketch)
Consider a partially certified fund $F_s$ resulting from a fractional spending transaction $\tid$ from a fund $F$. If the owner (seller) of $F_s$ is honest, then when the payment was made, the seller selected a quorum according to the quorum selection protocol and then got the payment validated by a sufficient number of honest validators from the selected quorum. The identities of these validators are not known to the adversary before the owner settles the fund. Also, each of these honest validator received the validation request from the seller before receiving a settlement request for $F$ because one of the conditions for validating a payment request is for the validator not to have $F$ is the set of funds for which there is a SETTLE transaction ($F.\tid \not\in \mathit{settle}$). Every honest validator that receives the settlement request for $F_s$ will either have $F \not\in \mathit{settle}$ or  $F \in \mathit{settle}$. If $F \in \mathit{settle}$, by the previous lemma, the validator must have added  $F_s$ to $\mathit{transactions}[F]$.
The validation of the settlement of $F_s$ will go through in both cases because the other conditions for validating a seller's settlement transaction will hold because the honest seller provides a correctly selected quorum and a large enough set of witnesses that validates the fractional spending transactions that created $F_s$.
\end{proof}

\begin{lemma}
Settlement for a partially certified funds equals payment amount: If $F''$ is a fully certified fund resulting from executing $(F',\mbox{SETTLE})$ for partially validated fund $F'$, then $F''.\fbl = F'.\fbl$.  
\end{lemma}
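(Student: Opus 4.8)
The plan is to show that the balance of the fund $F''$ resulting from a successful seller settlement of a partially validated fund $F'$ is exactly $F'.\fbl$, by tracing through Algorithms~\ref{seller-settlement} and~\ref{validator-seller-settlement} and arguing that (a) every honest validator that issues a settlement validation for $F'$ signs the \emph{same} balance value, namely $F.\fbl/k'_2$ where $F$ is the originating fully validated fund, and (b) this value equals $F'.\fbl$ by the definition of the fund resulting from a fractional spending transaction (Section~\ref{sec:partial-spending}: $F'.\fbl = F.\fbl/s_2$, with $s_2 = k_2 + f/v_s = k'_2$). First I would recall that $F'$, being partially validated, is by the protocol the fund resulting from some fractional spending transaction $(F,\pkb,\pks,\text{PAY})$ with a nonce $N_s$, so that $F'$ is fully determined by $\tid = \langle F,\pkb,\pks\rangle$ and $h_s = \textproc{H}(N_s)$; in particular $F'.\fid$ is the value $\mathit{Id}_2 = \textproc{H}(\textproc{H}(\tx\|N_s\|\text{PAY})\|\text{SETTLE})$ computed identically in both the seller's code and every validator's code.

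The core step is the agreement argument on the balance. In Algorithm~\ref{validator-seller-settlement}, an honest validator only sends $(\textbf{valid},\sign_v(\langle F'.\fid, F'.\fbl, F'.\owners\rangle), N_{settle})$ after recomputing $F'.\fbl := F.\fbl/k'_2$ from the $\tid$ (hence $F$) carried in the propagated settlement request; since $k'_2$ is a global constant and $F.\fbl$ is determined by the fully validated fund $F$ named in $\tid$, all honest validators that respond compute the identical triple $\langle F'.\fid, F'.\fbl, F'.\owners\rangle$. On the seller's side, Algorithm~\ref{seller-settlement} only adds $(v,\sigma)$ to $\mathit{settle\_witnesses}$ when $\sigma$ is a signature on exactly that triple with balance $F.\fbl/k'_2$, and it returns $F'$ with $F'.\fbl = F.\fbl/k'_2$. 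So the returned fund's balance is $F.\fbl/k'_2$. Then I would invoke the specification: the partially validated fund $F'$ that was settled is, by construction, the fund produced by the PAY transaction from $F$, whose balance the problem definition fixes at $F.\fbl/s_2 = F.\fbl/k'_2$. Chaining the two equalities gives $F''.\fbl = F.\fbl/k'_2 = F'.\fbl$.

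The main obstacle I anticipate is not the arithmetic but pinning down that the $F$ (and thus $F.\fbl$) reconstructed by validators during settlement is genuinely the originating fund of $F'$, rather than some fund a corrupt seller substitutes: I would need to lean on the fact that $\tid$ is the same $\tid$ that appears in the $\mathit{payment\_witnesses}$, that \textproc{ValidatedTransaction}$(\tid,\mathit{payment\_witnesses})$ together with the quorum-recomputation check $Q = \textproc{SelectQuorum}(\tid,N_s)$ and $Q' \subseteq Q$, $|Q'|\ge m - m\mu p_f$ forces the witnesses to be a legitimately chosen quorum that actually validated $(\tid,h_s)$ as a fractional spending transaction from $F$, so the $F$ named is the real one. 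A secondary subtlety is that the seller waits for $n-f$ settlement validations; I should note (as the surrounding text already does) that this many validations, of which at least one is honest and in fact enough are honest, guarantees a single well-defined balance value and hence a unique resulting fund — but for \emph{this} lemma it suffices that at least one honest validator's signature is on the triple with balance $F.\fbl/k'_2$, which is immediate since corrupt validators cannot forge honest signatures and the seller's acceptance test rejects any other balance. Everything else is a short unfolding of definitions.
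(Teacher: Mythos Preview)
Your proposal is correct and follows essentially the same approach as the paper: both argue that the equality follows directly from inspecting the settlement code, which sets the resulting balance to $F.\fbl/k'_2 = F.\fbl/s_2$, the same value assigned to $F'.\fbl$ when the fractional spending transaction creates $F'$. The paper's own proof is a one-line sketch (``follows immediately from the code''), whereas you supply considerably more detail---the agreement argument among honest validators, the seller's acceptance test, and the concern about a corrupt seller substituting a different $F$---none of which the paper addresses; your added care is sound but goes well beyond what the paper deems necessary for this lemma.
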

\begin{proof}(sketch)
This follows immediately from the code. When a fund $F'$ of a fractional spending transaction is settled, the balance of the resulting  fund $F''$ is equal to $F.\fbl/s_2$ which is also the spending amount (which we do not explicitly represent in the protocols). 
\end{proof}

\begin{lemma}
Settlement amounts for payments from $F$ are subtracted from settlement for $F$:  If executing transaction $(F,\mbox{SETTLE})$ results in a fully certified fund $F_R$:

   \[  F_R.\fbl \leq  F.\fbl - \sum_{F' \in \settled_F} F'.\fbl
    \] 
\end{lemma}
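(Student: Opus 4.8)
The plan is to show that every fund $F' \in \settled_F$ that contributes to the right-hand side corresponds to a distinct fractional spending transaction from $F$ that is counted in $|\mathit{transactions}[F]|$ at the honest validators who sign the settlement fund $F_R$, and that each such transaction subtracts exactly $1/s_2$ (equivalently $F.\fbl/s_2$) both from the settlement balance and as the balance of the settled fund $F' \in \settled_F$. First I would recall the structure: a fund $F' \in \settled_F$ is a fully certified fund resulting from $(F_{pay}, \text{SETTLE})$ for some $F_{pay} \in \funds_F$, i.e. some fractional spending transaction $(F, \pkb, \pks, \text{PAY})$ with seller $\pks$ (honest or corrupt). By the earlier lemma ``Settlement for a partially certified fund equals payment amount,'' we have $F'.\fbl = F_{pay}.\fbl = F.\fbl/s_2$. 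So $\sum_{F' \in \settled_F} F'.\fbl = |\settled_F| \cdot F.\fbl/s_2$, and it suffices to show $F_R.\fbl \le F.\fbl - |\settled_F| \cdot F.\fbl/s_2$, i.e. that $|\mathit{transactions}[F]| \ge |\settled_F|$ at the honest validators who produced the $n-2f$ signatures on $F_R$.

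Next I would argue the key inclusion. Each $F' \in \settled_F$ is the output of a seller settlement $(F_{pay}, \text{SETTLE})$, and by inspection of Algorithm~\ref{seller-settlement} such an $F'$ only comes into existence if the seller collected $n-f$ validations from validator code Algorithm~\ref{validator-seller-settlement}; each of those $n-f$ validators executed the line $\mathit{transactions}[F] = \mathit{transactions}[F] \cup \{(\tid, \Call{H}{N_s})\}$ for the corresponding $(\tid, h_s)$ of $F_{pay}$. Since $n-f$ and $n-2f$ quorums intersect in at least $n-3f \ge 1$ honest validators (and in fact the $n-2f$ signers of $F_R$ and the $n-f$ settlement validators of each $F' \in \settled_F$ overlap in honest validators by a counting argument), every honest validator $v$ that eventually signs $F_R$ has, at the time it computes $\mathit{new}\_\mathit{balance}$, all the pairs $(\tid, h_s)$ corresponding to funds in $\settled_F$ in its set $\mathit{transactions}[F]$ — provided those seller settlements were completed before $v$ finalized its count. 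I would then handle the ordering subtlety: if a seller settlement for some $F_{pay} \in \funds_F$ completes after the buyer's settlement has already produced $F_R$, I need the validator-side check in Algorithm~\ref{validator-seller-settlement} — namely $((\tid, \Call{H}{N_s}) \in \mathit{transactions}[F] \vee F \not\in \mathit{settle})$ — to guarantee that the seller settlement could only have succeeded if that transaction was already in $\mathit{transactions}[F]$ at $n-f$ validators (since after $F_R$ is produced we have $F \in \mathit{settle}$ at the relevant honest validators). Distinctness of the contributing transactions follows because each $F_{pay} \in \funds_F$ has a fund identifier uniquely determined by $F.\fid, \pks, \pkb, N_s$, so distinct $F' \in \settled_F$ give distinct pairs $(\tid, h_s)$.

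Finally I would assemble the bound: the $n-2f$ honest validators signing $F_R$ agree on the same $F_R$, hence on the same value $|\mathit{transactions}[F]|$, and from the code $F_R.\fbl = F.\fbl - |\mathit{transactions}[F]| \cdot (1/s_2) \cdot F.\fbl$ (reading the protocol's ``$\times 1/s_2$'' as a fraction of $F.\fbl$); combined with $|\mathit{transactions}[F]| \ge |\settled_F|$ and $F'.\fbl = F.\fbl/s_2$ for each $F' \in \settled_F$, this gives $F_R.\fbl \le F.\fbl - \sum_{F' \in \settled_F} F'.\fbl$. I expect the main obstacle to be the ordering/interleaving argument: carefully arguing that a seller settlement of $F_{pay}$ that ``beats'' the buyer settlement still forces $(\tid, h_s) \in \mathit{transactions}[F]$ at enough honest validators, using the disjunctive guard and the fact that $F$ is added to $\mathit{settle}$ only at the very end of Algorithm~\ref{v-buyer-settlement}, together with quorum-intersection counting between the $n-f$ settlement-validators and the $n-2f$ signers of $F_R$ — all of which must hold \whp given the asynchronous quorum guarantees. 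The arithmetic and the distinctness are routine by comparison.
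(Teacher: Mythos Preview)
Your proposal is correct and follows essentially the same approach as the paper: use the disjunctive guard in the seller-settlement validator (either $F \notin \mathit{settle}$ and the transaction is added to $\mathit{transactions}[F]$, or it is already there) to conclude that every settled payment is counted when $F$ is settled. The paper's proof is a two-sentence sketch of exactly this case split; you flesh it out with the quorum-intersection counting between the $n-f$ seller-settlement validators and the $n-2f$ signers of $F_R$, the distinctness argument, and the explicit reduction to $|\mathit{transactions}[F]| \ge |\settled_F|$, all of which the paper leaves implicit.
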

\begin{proof}(sketch)
Consider $F' \in  \settled_F$ that results from settling a fund $F_s$ identified by  
transaction $(\tid,N_s)$ when $F_s$ is settled, resulting in $F'$, either $F$ is not yet settled and $(\tid,\textproc{H}(N_s))$ gets added to $\mathit{transactions}[F]$ by the validators of the settlement transaction   or  $(\tid,\textproc{H}(N_s))$ is already in $\mathit{transactions}[F]$. In either case, when $F$ is settled, all these transactions will be in $\mathit{transactions}[F]$ and their amounts deducted from the balance of $F$.
\end{proof}

\begin{lemma}[No more than $s_2 = k_2+f/v_s$ fractional spending transactions]
With high probability, no more than $s_2 = k_2+f/v_s$ fractional spending transaction can be validated, where $v_s$ is the validation slack.
\end{lemma}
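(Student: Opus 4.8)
The plan is to argue by contradiction. Suppose that strictly more than $s_2 = k_2 + f/v_s$ fractional spending transactions from a single fund $F$ are validated, and enumerate them $T_1, T_2, \dots$ in the order in which their validations complete; let $Q_i$ (of size $m$) be the quorum returned by \textproc{SelectQuorum} for $T_i$ and $W_i$ the set of validators from which the seller of $T_i$ collects a \valid\ reply, so by the acceptance test of Algorithm~\ref{seller-partial-payment} we have $|W_i| \ge (1-\alpha)m$. Two structural facts will drive the argument. First, by Algorithm~\ref{validator-payment} a validator that is honest at the moment it processes a request inserts $F$ into $\mathit{validated\_fund}$ the first time it validates a transaction from $F$ and refuses every subsequent one, so a validator that stays honest can appear in at most one $W_i$. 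Second, by Lemma~\ref{lem:quorum-random} an honest seller's quorum is uniformly random and unknown to the adversary, while by Lemma~\ref{lem:quorum-corrupt-seller} even a computationally bounded corrupt seller that retries \textproc{SelectQuorum} cannot, except with negligible probability, obtain a quorum containing more than $(1+\mu)p_f m$ validators that were already corrupt when that quorum was fixed. Since the adversary runs for polynomially many steps, only polynomially many transactions are ever initiated, so a single union bound lets me assume that the ``w.h.p.'' conclusions of Section~\ref{sec:k1k2}, of Lemma~\ref{lem:uniform-prob-asynch}, and of Lemma~\ref{lem:quorum-corrupt-seller} hold simultaneously for every $T_i$.

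The heart of the proof applies the lower-bound property of the asynchronous $(k_1,k_2)$-quorum system (Lemma~\ref{lem:uniform-prob-asynch}) to each transaction $T_i$ with $i \ge k_2+1$, taking as the $J_2$-quorums the quorums of $k_2$ transactions validated before $T_i$. The property guarantees that at least $\beta|Q_i| = \beta m$ members of $Q_i$ that were not corrupt when $Q_i$ was fixed lie in the union of those $k_2$ earlier quorums; one then argues that these validators have already spent their single vote on an earlier transaction from $F$, hence, being honest at quorum-selection time, would reply \invalid\ to $T_i$. At most $(1-\beta)m$ members of $Q_i$ remain — namely the validators that were pre-corrupted or that are honest but have not yet voted — so the \valid\ replies the seller of $T_i$ can gather number at most $(1-\beta)m$ plus the number of members of $Q_i$ the adversary additionally corrupts and flips to \valid. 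For $T_i$ to be validated this last number must be at least $(1-\alpha)m - (1-\beta)m = (\beta-\alpha)m = v_s$, and, crucially, every such flipped validator was honest (in particular not yet corrupted) at the moment $Q_i$ was fixed.

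It remains to bound the total number of these ``flipping'' corruptions. The key point is that they are essentially disjoint across the transactions $T_{k_2+1}, T_{k_2+2}, \dots$: a validator the adversary corrupts in order to flip $T_i$ already belongs to the pre-corrupted set $\mathcal{F}_{pr}$ of every transaction whose quorum is fixed afterward, and is therefore excluded by the $-\mathcal{F}_{pr}$ term of the lower-bound property there; for transactions whose quorums were fixed earlier it was either already charged to one of them or merely contributed to the $(1+\mu)p_f m$ pre-corruption slack, which the quasi-randomness of every quorum keeps negligible relative to $v_s$. Hence the flipping corruptions for distinct post-$k_2$ transactions can be charged to disjoint validators, and since the adversary corrupts at most $f$ validators in total there are at most $f/v_s$ transactions beyond $T_1,\dots,T_{k_2}$, i.e.\ at most $s_2 = k_2 + f/v_s$ in all — contradicting the assumption. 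Since each validated fractional spending transaction withdraws exactly $F.\fbl/s_2$, the total withdrawn is at most $F.\fbl$, which is the remaining part of the requirement.

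I expect the accounting in the last step — making precise that the flipping corruptions do not double-count — to be the main obstacle, because with an adaptive adversary and asynchronous delivery a single corrupt validator keeps replying \valid\ to arbitrarily many later transactions for free, so the argument must isolate exactly the corruptions that are genuinely forced by the validation slack and attribute each to a distinct transaction; this needs a careful treatment of the timeline (when quorums are fixed versus when corruptions occur, and how to linearize concurrently validated transactions so that ``$k_2$ previously validated transactions'' is well defined), together with the bridging step that turns ``membership in an earlier quorum'' into ``has already validated an earlier transaction from $F$''. The other ingredients — the lower-bound property itself and the quasi-randomness of corrupt sellers' quorums — are supplied by Lemmas~\ref{lem:uniform-prob-asynch} and~\ref{lem:quorum-corrupt-seller}.
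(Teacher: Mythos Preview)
Your proposal is correct and follows essentially the same approach as the paper, which gives only a brief sketch: the paper asserts that at most $k_2$ transactions can be validated absent targeted corruptions, that each further transaction forces the adversary to corrupt $(\beta-\alpha)m = v_s$ \emph{new} validators, and hence at most $f/v_s$ additional transactions are possible. Your version is considerably more careful, and your identification of the two delicate points --- the disjointness of the ``flipping'' corruptions across post-$k_2$ transactions, and the bridging step from ``lies in an earlier quorum'' to ``has already validated an earlier transaction from $F$'' --- is well placed; the paper's proof glosses over both without comment.
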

\begin{proof}
With high probability, $k_2$ is an upper bound on the number of fractional spending transactions that can be validated if the adversary can only randomly chose the validators to corrupt, which is the case when the seller is honest. This is also the bound if no vaildators are corrupted.
%So, $k_2$ is also be the number of partial spendings that are possible. 
So, we assume that $k_2$ fractional spending transactions are executed with no validator corruption. Any additional transactions would have quorums of validators that intersect the previously selected quorums in $\beta m$ validators excluding previously corrupted validators. It follows that the adversary would need to corrupt $(\beta-\alpha)m$ new validators for each additional fractional spending transactions so that the number of validators that validate the transaction is raised to $\alpha m$. A total of $f/((\beta-\alpha)m$ additional spending transactions is possible to execute this way before running our of new validators to corrupt. The total number of fractional spending transactions possible is therefore $s_2 = k_2 +f/((\beta-\alpha)m$ =  $k_2+f/v_s$. 
\end{proof}
    
\begin{lemma}
If the owners of a fund $F$ are honest, settlement amount is no less than the the initial balance of $F$ minus payments made from $F$: 
    If executing transaction $(F,\mbox{SETTLE})$ results in a fully certified fund $F_R$ and $F.\owners$ is honest:
     \[  F_R.\fbl \geq F.\fbl - \sum_{F' \in \funds_F} F'.\fbl
    \]
\end{lemma}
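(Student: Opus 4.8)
\emph{Reduction to a counting claim.} First I would unwind the code to see what $F_R$ actually is. By Algorithm~\ref{buyer-settlement}, the honest owner returns a fund $F_R$ only after $n-2f$ validators have sent identical replies $F'=F_R$; since at most $f$ validators are corrupt and $n>3f$, at least one honest validator $v$ sent $F_R$. That validator computed $F_R.\fbl = F.\fbl - |\mathit{transactions}_v[F]|\cdot (F.\fbl/s_2)$ in the final block of Algorithm~\ref{v-buyer-settlement}, where $\mathit{transactions}_v[F]$ is $v$'s local set at the time the reply was sent (if instead $|\mathit{transactions}_v[F]|>s_1$ the validator would abort, so this case does not produce an $F_R$). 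Since every $F'\in\funds_F$ has $F'.\fbl = F.\fbl/s_2$, the desired inequality $F_R.\fbl \ge F.\fbl - \sum_{F'\in\funds_F}F'.\fbl$ is equivalent to the purely combinatorial statement $|\mathit{transactions}_v[F]| \le |\funds_F|$. So the whole lemma reduces to: for an honest owner and an honest validator $v$, $\mathit{transactions}_v[F]$ never overcounts genuine fractional spending transactions from $F$.

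\emph{Every entry of $\mathit{transactions}_v[F]$ carries validation evidence.} Next I would enumerate the (only) places an honest $v$ adds to $\mathit{transactions}_v[F]$: (i) in the seller-settlement validator code (Algorithm~\ref{validator-seller-settlement}), where the guard requires $Q=\textproc{SelectQuorum}(\tid,N_s)$, $\textproc{ValidatedTransaction}(\tid,\mathit{payment\_witnesses})$, $Q'\subseteq Q$ with $|Q'|\ge m-m\mu p_f$; and (ii) in the buyer-settlement validator code (Algorithm~\ref{v-buyer-settlement}), via a propagated witness message for $F$, or via an entry of $\mathit{payments}[F]$ that passed $\textproc{ValidPayment}$. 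In all cases the entry is a pair $(\tid,h_s)$ accompanied by a set of validator validations $\langle\valid,\tid,h_s,\sign_\skv(\tid\|h_s)\rangle$ drawn from the deterministically-recomputable quorum $Q=\textproc{SelectQuorum}(\tid,N_s)$, of size $|Q'|\ge m-m\mu p_f$, or by a buyer blind signature $\sigma=\sign_\pkb(\textproc{H}(\tid\|N_h))$. The plan is to show any such evidence forces participation of an honest validator, hence of the honest buyer.

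\emph{Validation evidence implies the honest buyer signed.} Even a corrupt seller colluding with corrupt validators can only choose $N_s$ by grinding; by Lemma~\ref{lem:quorum-corrupt-seller} together with the parameters of the asynchronous $(k_1,k_2)$-quorum construction (Lemma~\ref{lem:uniform-prob-asynch}), \whp the recomputed quorum $Q$ contains at most $(1+\mu)p_f m$ previously corrupt validators. Since the witness set has size at least $m-m\mu p_f > (1+\mu)p_f m$ for the chosen parameters, at least one witness is an honest validator; and by Algorithm~\ref{validator-payment} an honest validator emits $\langle\valid,\tid,h_s,\cdot\rangle$ (resp. the blind signature used in $\textproc{ValidPayment}$) only after verifying $\sign_\pkb(\tid\|h_s\|\textproc{H}(v\|N))$. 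Because $\tid=\langle F,\pkb,\pks\rangle$ encodes $F$ and the owner $\pkb$ is honest, this signature exists only if the honest buyer ran \textproc{BuyerFractionalSpend} for a PAY transaction from $F$ to $\tid.\pks$ whose QUORUM reply committed to nonce-hash $h_s$. Hence the pair $(\tid,h_s)$ is precisely the identifier of a fund $F'\in\funds_F$.

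\emph{Injectivity and conclusion.} Finally I would check the map $(\tid,h_s)\mapsto F'$ from $\mathit{transactions}_v[F]$ into $\funds_F$ is injective: distinct pairs give distinct fund identifiers by construction (two honest-buyer invocations yield the same $h_s$ only on a nonce collision, which occurs with negligible probability), and an honest buyer signs exactly one $h_s$ per invocation, so no two genuine payments collapse to the same pair; set semantics of $\mathit{transactions}_v[F]$ handle duplicates. This yields $|\mathit{transactions}_v[F]|\le|\funds_F|$ \whp, closing the reduction. \textbf{The main obstacle} is the third step: ruling out ``phantom'' transactions that a corrupt seller and corrupt validators might fabricate, i.e.\ certifying that any validation-bearing tuple counted by an honest validator necessarily involves an honest validator's check and therefore the honest owner's signature. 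This is exactly where the quorum corrupt-ratio bounds against a \emph{grinding} corrupt seller are needed, and where one must verify that no code path adds to $\mathit{transactions}[F]$ without such evidence and that signatures are bound to $\tid$ (hence to $F$), preventing cross-fund replay.
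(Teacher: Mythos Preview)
Your proof is correct and follows the same underlying idea as the paper---only transactions bearing the honest owner's signature can enter $\mathit{transactions}[F]$, hence the deduction is bounded by the genuine payments in $\funds_F$---but the paper's proof is a one-sentence sketch (``This follows directly from how the settlement amount is calculated in which properly validated transactions originating from the owners of $F$ are included in $\mathit{transactions}[F]$ \ldots''), whereas you give a full argument: the explicit reduction to $|\mathit{transactions}_v[F]|\le|\funds_F|$, the enumeration of every code path that writes to $\mathit{transactions}[F]$, the injectivity check, and in particular the appeal to Lemmas~\ref{lem:quorum-corrupt-seller} and~\ref{lem:uniform-prob-asynch} to rule out a grinding corrupt seller fabricating a witness set of only corrupt validators in the seller-settlement path. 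That last step is the substantive content the paper's sketch elides; your version makes precise why unforgeability of the owner's signature alone is not enough on that path and why the quorum corrupt-ratio bound is needed. So: same approach, but yours is the rigorous instantiation of what the paper merely asserts.
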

\begin{proof}(sketch)
This follows directly from how the settlement amount is calculated in which properly validated transactions originating from the owners of $F$ are included in $\mathit{transactions}[F]$ and are deducted from the resulting balance.
\end{proof}

\begin{lemma}[Non-interference] If a total of $s \leq s_1 = k_1$ payment transactions are made from  fully certified fund $F$ and no additional payment or settlement transactions are initiated by $F.\owners$, then,  every one of the $s$ transactions whose seller (payee) is honest will be validated independently of the validations of the other payment transactions from $F$.\end{lemma}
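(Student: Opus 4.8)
The plan is to show that each of the $s \leq k_1$ payment transactions with an honest seller collects enough validations, and that this happens for each such transaction regardless of whether the other $s-1$ transactions ever complete (the sellers may be arbitrarily slow). I would fix one honest seller $\pks$ handling transaction $\tid_i$ and argue about its run of \textproc{SellerFractionalSpending}. By Lemma~\ref{lem:quorum-random}, the quorum $Q_{\tid_i}$ returned by \textproc{SelectQuorum} is uniformly random, and by Lemma~\ref{lem:unknown-validators} its correct members are unknown to the adversary (and to the buyer). Hence the asynchronous-communication assumption from Section~\ref{sec:model} applies: when $\pks$ sends validation requests to all of $Q_{\tid_i}$ and waits for $m - (1+\mu)p_f m$ replies, it is guaranteed \whp to receive that many, since the adversary can selectively delay at most the $(1+\mu)p_f m$ validators it can guess are in $Q_{\tid_i}$ but not all of them. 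Of those replies, at least $m - 2(1+\mu)p_f m$ come from correct validators.

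The next step is to show that these correct validators actually return $\langle \valid, \ldots \rangle$ rather than $\langle \invalid, \ldots \rangle$. A correct validator rejects only if one of the guard conditions in Algorithm~\ref{validator-payment} fails; the signature check passes because the honest buyer signed the commitments (the buyer performs no other payment or settlement by hypothesis, so it does respond with \textsc{SIGNED\_Q}), the fund is not settled and is not in $\mathit{settle}$ because $F.\owners$ initiates no settlement, and the only remaining obstacle is $\tid.F \in \mathit{validated\_fund}$ — i.e. the validator already validated a \emph{different} fractional spending transaction from $F$. This is where the $(k_1,k_2)$-quorum upper bound enters: the set of validators already "used up" by the other at most $k_1 - 1$ payment transactions from $F$ is contained in $\bigcup_{j \in J_1} Q_j$ with $|J_1| \le k_1$, and together with $\mathcal{F}_{pr}$ this intersects $Q_{\tid_i}$ in at most $\alpha |Q_{\tid_i}| = \alpha m$ validators \whp, by part~(1) of the asynchronous $(k_1,k_2)$-quorum definition (Lemma~\ref{lem:uniform-prob-asynch} gives such a system with $\alpha = 1/3$). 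Therefore, excluding the $\le (1+\mu)p_f m$ non-responders and the $\le \alpha m$ conflicted-or-corrupt validators, the number of correct validators in $Q_{\tid_i}$ that reply $\valid$ is at least $m - \alpha m$ minus lower-order terms, so $|\mathit{witnesses}| \ge (1-\alpha) m$ and the seller returns a certificate rather than $\bot$.

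I would then assemble these observations: condition on the high-probability events for each of the $s \le k_1$ honest-seller transactions (the random quorum having few prior-corrupt members, the upper-bound intersection property, and enough replies arriving), take a union bound over the $s \le k_1$ transactions so the failure probability stays negligible, and conclude that every honest-seller transaction is validated. Crucially, the argument for $\tid_i$ never needed the other transactions to \emph{complete} — it only used that they contribute at most $k_1 - 1$ quorums to the union in the upper-bound event, which holds whether or not their sellers have finished. That is exactly the independence/non-interference claim.

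The main obstacle I anticipate is pinning down the counting in the second paragraph so the three "bad" sets — non-responders, prior-corrupt validators, and validators conflicted by the other $k_1-1$ transactions — are handled without double-counting and so that the leftover correct-and-validating validators genuinely exceed the $(1-\alpha)m$ threshold; this requires the parameter slack in Lemma~\ref{lem:uniform-prob-asynch} ($n > 8f$, $k_1 m/n < 1/24$) to absorb the $(1+\mu)p_f m$ terms, and I would want to state the inequality $m - \alpha m - O(p_f m) \ge (1-\alpha)m$ carefully (it is really that the \emph{witnesses} count, which already excludes corrupt and conflicted validators, meets the bound). A secondary subtlety is making explicit why the honest buyer is guaranteed to send \textsc{SIGNED\_Q} for every one of the $s$ transactions — this follows because the buyer's protocol (Algorithm~\ref{buyer-partial-payment}) signs any well-formed quorum of size $m$, and the hypothesis forbids the buyer from doing anything else that could block it.
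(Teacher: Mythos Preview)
Your proposal is correct and follows essentially the same line as the paper's proof sketch: fix one honest-seller transaction, invoke the randomness and secrecy of its quorum, appeal to the asynchronous $(k_1,k_2)$ upper-bound property to bound the number of conflicted-or-corrupt validators in the quorum by $\alpha m$, and conclude that at least $(1-\alpha)m$ of the replies are validations. Your write-up is considerably more explicit than the paper's three-line sketch, and you correctly flag as the main obstacle the counting tension between the $(1+\mu)p_f m$ non-responders and the $(1-\alpha)m$ threshold---a point the paper's sketch simply asserts away.
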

\begin{proof}(sketch)
We consider one of these transactions with an honest seller. The seller will select a quorum of validators according to the protocol and gets validations from the buyer in $F.\owners$. Then the seller will contact the validators to get the transaction validated. By the properties of $(k_1,k_2)$-quorum systems, the seller will get replies from $m - (1+\mu) p_f m$ validators, $(1 - \alpha) \times m$ of which have not previously validated another transaction from $F$ and the transaction will be validated.
\end{proof}

\begin{lemma}
Successful settlement for fully certified funds: If the owner of fully certified fund $F$ is honest and executes an $(F,\mbox{SETTLE})$ transaction, the settlement transaction for $F$ will terminate. 
\end{lemma}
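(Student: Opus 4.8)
The plan is to reduce termination of \textproc{BuyerSettle}$(F)$ (Algorithm~\ref{buyer-settlement}) to two facts about the validators' code (Algorithm~\ref{v-buyer-settlement}): (a) every honest validator eventually sends a signed fund $(F',\sign_v(F'))$ to the owner, and (b) the $F'$ that honest validators send is the same for all of them. Since at most $f$ validators are faulty there are at least $n-f$ honest validators (in particular at least $n-2f$), so (a) and (b) together give the owner $n-2f$ identical replies and \textproc{BuyerSettle} returns. First I would establish (a). The honest owner sends $(\text{SETTLE},F)$ to all validators; because channels between correct parties eventually deliver and $\pkb\in F.\owners$, every honest validator enters \textproc{ValidatorBuyerSettle}. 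It then runs \textproc{Propagate} (Algorithms~\ref{propagate-c},~\ref{propagate-v}) on its own witness/non-witness statement, as an honest client, so by the \textproc{Propagate} correctness lemma that call terminates and the statement reaches $n-2f$ honest validators — and, through the forwarding step plus eventual delivery between correct parties, every honest validator. Hence at every honest validator each of the $\ge n-f$ honest validators is eventually added to $\mathit{settle\_validators}[F]$ and the guard $|\mathit{settle\_validators}[F]|=n-f$ fires. I then need the \texttt{abort} branch to be unreachable: $\mathit{transactions}[F]$ only ever receives pairs $(\tid,h_s)$ whose commitment carries the signature of $F.\owners$ (checked by \textproc{ValidPayment} and by the payment-validation code of Algorithm~\ref{validator-payment}), so its size is bounded by the number of fractional spending transactions the honest owner issued from $F$ — at most $s_1$ — and in any case at most $s_2$ \whp by the ``no more than $s_2=k_2+f/v_s$ fractional spending transactions'' lemma; either bound keeps $|\mathit{transactions}[F]|$ below the threshold. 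So each honest validator computes $F'$ and replies.

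The substantive part is (b): that all honest validators compute the \emph{same} $F'$. Every honest validator sets $F'.\fid=H(F.\fid)$ and $F'.\owners=F.\owners$, so the whole question is whether $\ge n-2f$ honest validators agree on $F'.\fbl=F.\fbl-|\mathit{transactions}[F]|/s_2$, i.e.\ on the count of counted payments at the moment they reply. The approach is to show that, at reply time, $\mathit{transactions}[F]$ at every honest validator equals exactly the set of payments from $F$ that were witnessed by at least one honest validator. The ``no spurious payment'' inclusion is immediate from the buyer-signature check in \textproc{ValidPayment}. For the reverse inclusion I would argue that no such payment can be missing once a validator has collected $n-f$ witness statements: a payment witnessed by an honest validator $w$ is, by the structure of the code, revealed \emph{inside} the statement $w$ propagates, so learning ``$w\in\mathit{settle\_validators}[F]$'' is inseparable from learning that payment; combining this with the fact that payments to honest sellers are witnessed by enough honest validators (Lemmas~\ref{lem:quorum-random},~\ref{lem:unknown-validators},~\ref{lem:uniform-prob-asynch}, using that the validating quorum is chosen at random and hidden from the adversary), and that payments whose payment-fund has already been settled are placed in $\mathit{transactions}[F]$ at $n-f$ validators by the seller-settlement protocol (Algorithms~\ref{seller-settlement},~\ref{validator-seller-settlement}, via the ``transactions to honest sellers are accounted for'' lemma), one concludes that the counted set is common to all honest validators, and hence so is $F'$.

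I expect step (b) — in particular the claim that the asynchronous, $n-f$-bounded collection of witness statements cannot leave two honest validators with different counted-payment sets — to be the main obstacle, and the place where the $n>8f$ and $k_1 m/n$-small hypotheses of Lemma~\ref{lem:uniform-prob-asynch} and the hidden-randomness argument of Section~\ref{sec:quorum-select-proofs} are genuinely needed; by comparison the termination of \textproc{Propagate}, the ``at least $n-2f$ honest validators'' counting, and the no-\texttt{abort} check are routine. If the clean convergence of $\mathit{transactions}[F]$ turns out to require it, the fallback is to have validators keep collecting statements past $n-f$ and re-emit an updated $F'$ whenever $\mathit{transactions}[F]$ grows, and to observe that this sequence stabilizes \whp; the owner then waits for $n-2f$ matching final replies.
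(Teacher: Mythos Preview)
Your approach matches the paper's: its proof is precisely that an honest owner signs at most $s_1$ commitments, so the \textproc{ValidPayment} check bounds $|\mathit{transactions}[F]|\le s_1$ at every validator, the \texttt{abort} branch is unreachable, and hence every honest validator handling the settlement replies and ``eventually the buyer will get $n-2f$ identical replies from validators.'' That last clause is the entirety of what the paper says about your part~(b); it is asserted, not argued.

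So you are not taking a different route, you are taking the same route more carefully. All the work you invest in~(b) --- arguing that adding $w$ to $\mathit{settle\_validators}[F]$ is inseparable from learning $w$'s witnessed payment, invoking the hidden-quorum randomness of Section~\ref{sec:quorum-select-proofs} and Lemma~\ref{lem:uniform-prob-asynch}, and the fallback of re-emitting updated $F'$'s --- goes beyond what the paper supplies. Your instinct that~(b) is the genuine obstacle is correct, and the concern you raise (that honest validators stopping at different $n{-}f$-sized witness sets might compute different $|\mathit{transactions}[F]|$) is real; the paper's pseudocode in Algorithm~\ref{v-buyer-settlement} is not precise enough to settle it cleanly, and the paper's proof simply does not address it. Note, though, that your fallback would be a protocol modification rather than a proof of the lemma as stated.
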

\begin{proof}
Since the owner is honest, at most $s_1 = k_1$ fractional spending transactions from $F$ can be executed. When the $(F,\mbox{SETTLE})$ transaction is executed, the validators will have no more than $s_1$ transactions in $\mathit{transaction}[F]$ because every transaction in $\mathit{transaction}[F]$ must be checked for validity with 
\textproc{ValidPayment}($\tid,h_s,\sigma,N$) which checks that sigma is a valid signature by the buyer for $\tid||h_s||N$. Since the only potentially blocking condition in the validator's buyer settlement code is when there are more than $s_1$ transactions in $\mathit{transaction}[F]$, every honest validator that handles the settlement will validate $(F,\mbox{SETTLE})$ and eventually the buyer will get $n-2f$ identical replies from validators. 
\end{proof}

 %done

\section{\texorpdfstring{$(k_{1},k_{2})$}--Quorums Properties}\label{sec:k1k2-proofs}

\begin{L1}
An $(m,n)$ uniform balanced quorum system is a $(k_1,k_2)$-quorum system with $\alpha = 1/3$ and $\beta = 2/3$ has $\epsilon = \delta = \negl[n]$ and {\em validation slack} = $m/3$, if $p_f+\alpha_1 < 1/3$, where $\alpha_1 = k_1m/n$ and $p_f = f/n$.
\end{L1}
\begin{proof}
We need to show that:

\begin{enumerate}
    \item Upper bound: $\Pr_{Q , Q_j \leftarrow  \mathcal{Q}\,;\, j \in J_1; \,|J_1| \leq k_1}[|Q \cap ({\mathcal{F}_{pr}}\cup \bigcup_{j \in J_1} Q_j)| > m/3 ] \leq \negl[n]$
    \item Lower bound: $\Pr_{Q , Q_j \leftarrow  \mathcal{Q}\,;\, j \in J_2; \,|J_2| \geq k_2}[|(Q \cap (\bigcup_{j \in J_2} Q_j))-{\mathcal{F}_{pr}}| \leq 2m/3 ] \leq \negl[n]$
\end{enumerate}

For the upper bound, consider the intersection of a quorum $Q$ with the union of up to $k_1$ previously selected quorums. Since $k_1m/n = \alpha_1$, $\alpha_1 n$ is an upper bound on the size of the union of the $k_1$ previously selected quorums because each of them has size $m$. The expected size of the intersection of $Q$ with these $k_1$ previously selected quorums is at most $\alpha_1|Q| = \alpha_1 m$ because the probability that a given randomly chosen server in $Q$ is in the union is at most $\alpha_1 n / n = \alpha_1$. Similarly, the probability that a server in $Q$ is previously corrupted is at most $p_f$ and the expected number of servers in $Q$ that are previously corrupted is at most $p_f|Q| = p_fm$. So, the expected size of the intersection of $Q$ with the union of the previously corrupted servers together with the union of $k_1$ previously selected quorums is at most $(\alpha_1+p_f)m$. Since $\alpha_1+p_f < 1/3$, $(\alpha_1+p_f)m < m/3$. Let $r = (1/3(\alpha_1+p_f))-1$. This value of $r$ is positive and independent of $m$. For this value of $r$, we have $(1+r)\times (\alpha_1+p_f)m = m/3$. So, the probability that the size of the intersection exceeds by a factor $1+r$ the expected size of the intersection is the same as the probability that the size of the intersection exceeds $m/3$

By Chernoff's upper tail bounds~\cite{probability-book}, the probability that the size of the intersection exceeds by a factor of $1+r$, $r > 0$, the upper bound $(\alpha_1+p_f)m$ on the expected size is:

\[
\Pr_{Q \leftarrow  \mathcal{Q}}[|Q \cap ({\mathcal{F}_{pr}}\cup \bigcup_{j \in J_1} Q_j)| \geq (1+r)\times (\alpha_1+p_f)m = m/3] \leq e^{-r^2 \times (\alpha_1+p_f)m/3 } 
\]
which is a negligible function of $m$. Since $r$ does not depend on $m$, but on the value of $\alpha_1+p_f$   which is a constant for given $p_f$, $k_1$ and $k_2$, the function above decreases exponentially with $m$ for fixed $\alpha_1$ and $p_f$ satisfying $\alpha_1+p_f < 1/3$. Also, for a fixed $k_1$ and $\alpha_1$, $m$ grows linearly with $n$: $m = \alpha_1 n/k_1$. In other words, the probability is a negligible function of $n$ for fixed $k_1$, $\alpha_1$ and $p_f$.

Now, it remains to show that the lower bound requirement holds: \[\Pr_{Q , Q_j \leftarrow  \mathcal{Q}\,;\, j \in J; \,|J| \geq k_2}[|(Q \cap (\bigcup_{j \in J} Q_j))-{\mathcal{F}_{pr}}| < 2m/3 ] \leq  \negl[n]
\]

The expected number of servers in a quorum $Q$ that are in the union of $k_2$ previously selected quorums is $(k_2m/n)|Q| = (1-\alpha_1)m$ because $n = (k_1+k_2)m$ for a uniform system. The expected number of servers in $Q$ that are previously corrupted is less than $p_f m$. So, the expected number of non-corrupted servers in $Q$ that are in one of the last $k_2$ chosen quorums is at least $(1-\alpha_1)m - p_fm = (1 -(\alpha_1+p_f))m > 2m/3$ because $\alpha_1+p_f < 1/3$ as we have noted above. 
If we define $r' = 1 - ((1 -(\alpha_1+p_f))/(2/3)$, we have $0 < r' < 1$, and $(1- r') \times (1 -(\alpha_1+p_f))m = 2m/3$.
Using Chernoff's lower tail bound, we have 
\[\Pr_{Q , Q_j \leftarrow  \mathcal{Q}\,;\, j \in J_2; \,|J_2| \geq k_2}[|(Q \cap (\bigcup_{j \in J} Q_j))-{\mathcal{F}_{pr}}| \leq (1- r) \times (1 -(\alpha_1+p_f))m = 2m/3] \leq e^{-r'^2 (1 -(\alpha_1+p_f))m/2}\]
which is a negligible function of $m$ and, therefore, as we argued above, is also a negligible function of $n$. 
\end{proof}

Even though the proof shows that the probability of not satisfying the $(k_1,k_2)$-quorum requirements is negligible, values of $\alpha_1$ and $p_1$ for which the sum $\alpha_1+p_f$ is very close to 1/3 result in probability bounds that are not useful in practice. For the parameter choices we made, the {\em validation slack} of the system is $m/3$. Different values for $\alpha_1$ and $p_f$ could result in lower validation slack but better overall performance. Determining the optimal combination of parameters is subject of future work.
\begin{L2}
An $(m,n)$ uniform balanced quorum system is a $(k_1,k_2)$-quorum system is a $(k_1,k_2, \epsilon, \delta, \alpha = 1/3, \beta = 2/3, \mu )$-asynchronous quorum system with $\epsilon = \delta = \negl[n]$ and {\em validation slack} = $m/3$, if $p_f = f/n < 1/8$ and $\alpha_1 = k_1m/n < 1/24$.
\end{L2}
\begin{proof}
The proof is almost identical to that of Lemma~\ref{lem:uniform-prob}. Recall that for asynchronous $(k_1,k_2)$-quorum systems, the  intersection properties should hold even if we exclude from a quorum $Q$ a set $Q_s$ of size at most $(1+\mu)p_fm$.

For the upper bound, not including some replies can only make the intersection smaller, so the proof that with negligible probability, the intersection's size is less than $m/3$ carries over to this setting. For the lower bound, the only difference is that we are excluding a set $Q_s$ of size at $(1+\mu)p_f n$ servers (in addition to the already excluded $p_f n$ servers). So, the expected size of non-corrupt servers in $Q-Q_s$ that are also in one of the previous $k_2$ chosen quorums is at least $(1-\alpha_1 - p_f)m - (1+\mu)p_f m = (1-\alpha_1 - (2+ \mu) p_f)m$. We show that with negligible probability the expected size is short by a factor of more than $(1-r)$ of the expected size, where  
 $r = 1 - ((1 -(\alpha_1+p_f+(1+\mu)p_f))/(2/3)$. Similarly to what we did in Lemma~\ref{lem:uniform-prob} for the synchronous case, we have $0 < r < 1$ and 
using Chernoff's lower tail bound, we have for any $Q_s$ such that  $|Q_s| \leq m p_f (1+\mu)$:
\begin{align}
 &\Pr_{Q , Q_j \leftarrow  \mathcal{Q}\,;\, j \in J_2; \,|J_2| \geq k_2}[|((Q-Q_s) \cap ((\bigcup_{j \in J}  Q_j)-{\mathcal{F}_{pr}})| < 2m/3] \\
& =  \Pr_{Q , Q_j \leftarrow  \mathcal{Q}\,;\, j \in J_2; \,|J_2| \geq k_2}[|(Q \cap ((\bigcup_{j \in J} Q_j)-{\mathcal{F}_{pr}-Q_s)}|] < \leq (1- r) \times (1 -(\alpha_1+p_f)m] \\
&  \leq e^{-r^2/2 \times (1 -(\alpha_1+p_f)m}
\end{align}
\end{proof}

\end{document}